\newcommand{\cP}{\mathcal P}
\newcommand{\cW}{\mathcal W}
\newcommand{\cL}{\mathcal L}
\newcommand{\cF}{\mathcal F}
\newcommand{\cX}{\mathcal X}
\newcommand{\cD}{\mathcal D}
\newcommand{\cV}{\mathcal V}
\newcommand{\scLCA}{\textsc{LCA}}
\newcommand{\scAMa}{\textsc{allMAAFs}$^1$}
\newcommand{\scPCa}{\textsc{processCherries}$^1$}
\newcommand{\scAMc}{\textsc{allMAAFs}$^2$}
\def\fc{$\ensuremath{\wedge}$}
\def\fcL{\ensuremath{\bigwedge}}
\def\HiLi{\leavevmode\rlap{\hbox to \hsize{\color{yellow!50}\leaders\hrule height .8\baselineskip depth .5ex\hfill}}}
\title{Fast computation of all maximum acyclic agreement forests for two rooted binary phylogenetic trees}
\author{Benjamin Albrecht
\thanks{
Benjamin Albrecht\\
Institut f\"ur Informatik, Ludwig-Maximilians-Universit\"at, Germany\\
Tel.: +49-89-2180-4069\\
E-mail: \email{albrecht@bio.ifi.lmu.de}}
}
\institute{Institut f\"ur Informatik, Ludwig-Maximilians-Universit\"at, Germany}
\date{\today}
\begin{document}
\maketitle
\vspace{3cm}

\begin{abstract}
Evolutionary scenarios displaying reticulation events are often represented by rooted phylogenetic networks. Due to biological reasons, those events occur very rarely, and, thus, networks containing a minimum number of such events, so-called minimum hybridization networks, are of particular interest for research. Moreover, to study reticulate evolution, biologist need not only a subset but \emph{all} of those networks. To achieve this goal, the less complex concept of rooted phylogenetic trees can be used as building block. Here, as a first important step, the trees are disjoint into common parts, so-called maximum acyclic agreement forests, which can then be turned into minimum hybridization networks by applying further network building algorithms. In this paper, we present two modifications of the first non-naive algorithm --- called \textsc{allMAAFs} --- computing \emph{all} maximum acyclic agreement forests for two rooted binary phylogenetic trees on the same set of taxa. By a simulation study, we indicate that through these modifications the algorithm is on average 8 times faster than the original algorithm making this algorithm accessible to larger input trees and, thus, to a wider range of biological problems.\\

\textbf{Keywords} Directed acyclic graphs $\cdot$ Hybridization $\cdot$ Maximum acyclic agreement forests $\cdot$ Bounded search $\cdot$ Phylogenetics

\end{abstract}

\newpage
\section{Introduction}

Phylogenetic studies deal with the reconstruction of evolutionary histories, which are often represented by rooted phylogenetic trees in which each node represents a certain speciation event. The representation of reticulation events combining different evolutionary histories, however, requires the more general concept of phylogenetic networks, which are more complex structures that can additionally consist of nodes of in-degree larger than one. Hybridization, for instance, is an example for such a reticulation event merging a sizable percentage of two genomes of two different species. Now, due to those reticulation events, for a set of species there can exist incongruent gene trees, which are phylogenetic trees based on different genes sampled for this  particular set of species all representing different evolutionary histories. On should, however, keep in mind that there can also exist other biological effects leading to incongruent gene trees, e.g. incomplete lineage sorting, still allowing to model evolutionary histories by phylogenetic trees. 

As it is the case for hybridization events, a reticulation event is in general expected to occur very rarely. Thus, in order to study reticulate evolution, one is interested in the minimum number of such events explaining a set of incongruent gene trees. A further more complicated but strongly connected problem is the computation of \emph{rooted phylogenetic networks} displaying evolutionary scenarios containing such events. In such a network each node with in-degree one reflects a putative speciation event, whereas each node of in-degree of at least two, a so-called reticulation node (or, in respect to hybridization, hybridization node) represents a putative reticulation event.

Phylogenetic studies are often joint works between biologists, mathematicians and computer scientists; usually biologists produce certain data, mathematicians develop a model for analyzing this data and computer scientists implement programs, which enable the investigation of the data based on these models. Regarding the investigation of reticulate evolution, the underlying problem is in general highly combinatorial, which complicates the work of computer scientists as naive approaches are typically insufficient for analyzing biologically relevant data. 

In this work, we tackle this problem by focusing on the first non-naive algorithm \textsc{allMAAFs} \cite{Scornavacca2012} that enables the computation of \emph{all} maximum acyclic agreement forests for two rooted binary phylogenetic trees, which can be considered, from a mathematical point of view, as an intermediate step in calculating \emph{all} possible rooted phylogenetic networks. This is the case, since those networks can then be computed by gluing the components of such maximum acyclic agreement forests back together in a specific way which, for example, has already been demonstrated by the algorithm \textsc{HybridPhylogeny} \cite{Baroni2006}. The computation of a maximum acyclic agreement forest for two binary phylogenetic trees, however, is a well-known \emph{NP-hard} problem \cite{Bordewich-2007/1} and, thus, in order to apply this step to large input trees and thereby making the algorithm accessible to a wider range of biological problems, we have worked out two modifications significantly improving the practical running time of the algorithm \textsc{allMAAFs}.

Notice that so far there have been developed several approaches for analyzing reticulation events \cite{Albrecht2011, Chen2010, Collins2011, Whidden2009}. Whereas some of these approaches just enable the computation of hybridization numbers, some additionally can be used to generate a set of minimum hybridization networks each representing a particular evolutionary scenario. Notice, however, that until now none of these methods is able to calculate \emph{all} minimum hybridization networks, which, as already emphasized, under a biological point of view, is an important feature for studying reticulate evolution.

The paper is organized as follows. First, we give all basic notations and some well-known results of phylogenetics research, which are used throughout this work. Next, we describe two modifications \scAMa~and \scAMc~of the algorithm \textsc{allMAAFs} by presenting its respective pseudo code. In a subsequent step, the correctness of each of these modified algorithms is established by formal proofs and, finally, in order to indicate the speedup obtained from applying these modifications, we present the results of a simulation study comparing runtimes attained from its respective implementations. Lastly, we finish this paper by briefly discussing the theoretical worst-case running time of both algorithms \scAMa~and \scAMc.

\section{Preliminaries}
\label{sec-pre}

In this section, we give all formal definitions that are used throughout this work for describing and discussing the original algorithm \textsc{allMAAFs} as well as our two modified algorithms \scAMa~and \scAMc. These definitions correspond to those given in the work of \citet{Huson2011} and \citet{Scornavacca2012}. We assume here that the reader is familiar with general graph-theoretic concepts.\\

{\bf Phylogenetic trees.} A \emph{rooted phylogenetic $\cX$-tree} $T$ is a directed acyclic connected graph whose edges are directed from the root to the leaves as defined in the following. There is exactly one node of in-degree $0$, namely the \emph{root} of $T$. The set of nodes of out-degree $0$ is called the \emph{leaf set of $T$} and is labeled one-to-one by a \emph{taxa set} $\cX$, also denoted by $\cL(T)$. Here, the taxa set $\cX$ usually consists of certain species or genes whose evolution is outlined by $T$. The tree $T$ is called \emph{binary} if all of its nodes, except the root, provide an in-degree of $1$ and if all of its nodes, except all leaves (the so-called \emph{inner or internal nodes}) provide an out-degree of $2$. Given a node $v$, the label set $\cL(v)$ refers to all taxa that are contained in the subtree rooted at $v$. In addition, given a set of rooted phylogenetic $\cX$-trees $\cF$, by $\cL(\cF)$ we refer to the union of each leaf set of each tree in $\cF$.

Now, given a set of taxa $\cX' \subseteq \cL$, the notation $T(\cX')$ refers to the minimal connected subgraph of $T$ containing $\cX'$. A \emph{restriction} of $T$ to $\cX'$, shortly denoted by $T|_{\cX'}$, is a rooted phylogenetic tree that is obtained from $T(\cX')$ by suppressing each node of both in- and out-degree $1$. Moreover, given two phylogenetic trees $T_1$ and $T_2$ in which $\cX_1$ and $\cX_2$, with $\cX_1\subseteq\cX_2$, denotes the taxa set of $T_1$ and $T_2$, respectively, we say $T_1$ is \emph{contained} in $T_2$, shortly denoted by $T_1\subseteq T_2$, if $T_2|_{\cX_1}$ is isomorphic to $T_1$.

Next, the \emph{lowest common ancestor} of a phylogenetic $\cX$-tree $T$  in terms of a taxa set $\cX' \subseteq \cX$ is the node $v$ in $T$ with $\cX'\subseteq\cL(v)$ such that there does not exist another node $v'$ in $T$ with $\cX'\subseteq\cL(v')$ and $\cL(v')\subseteq\cL(v)$. In the following, such a node is shortly denoted by $\scLCA_T(\cX')$.

Lastly, given a set of phylogenetic trees $\cF$ as well as an edge set $E'$ that is contained in $\cF$ such that for each pair of edges $e_1,e_2\in E'$, with $e_1\neq e_2$, $e_1$ is not adjacent to $e_2$. Then, by $\cF-E'$ we refer to the set $\cF'$ of trees that is obtained from $\cF$ by \emph{cutting $E'$}. More precisely, first each edge in $E'$ is deleted in $\cF$ and then each node of both in- and out-degree $1$ is suppressed. Notice that by deleting an edge of a tree $F$ in $\cF$, this tree is separated into two parts $F_a$ and $F_b$ so that the resulting forest equals $\cF\setminus\{F\}\cup\{F_a,F_b\}$. Consequently, after deleting $E'$ from $\cF$ the resulting set $\cF'$ contains precisely $|\cF|+|E'|$ trees.

During the algorithm \textsc{allMAAFs}, an agreement forests for two rooted binary phylogenetic $\cX$-trees $T_1$ and $T_2$ is calculated by cutting down one of both trees, say $T_2$, into several components such that each component corresponds to a restricted subtree of $T_1$. In order to keep track of the component $F_{\rho}$ containing the root, the root of $T_2$ has to be a node that is marked in a specific way. Thus, throughout this paper, we regard the root of each rooted binary phylogenetic $\cX$-tree as a node that is attached to its original root and to a taxon $\rho\not\in \cX$ (cf.~Fig.~\ref{fig-example}(a)).\\

\textbf{Phylogenetic networks.} A \emph{rooted phylogenetic network} $N$ on $\cX$ is a rooted connected digraph whose edges are directed from the root to the leaves as defined in the following. There is exactly one node of in-degree $0$, namely the \emph{root}, and no nodes of both in- and out-degree $1$. The set of nodes of out-degree $0$ is called the \emph{leaf set of $N$} and is labeled one-to-one by the \emph{taxa set} $\cX$, also denoted by $\cL(N)$. In contrast to a phylogenetic tree, such a network may contain undirected but not any directed cycles. Consequently, $N$ can contain nodes of in-degree larger than or equal to $2$, which are called \emph{reticulation nodes} or \emph{hybridization nodes}. Moreover, all edges that are directed into such a reticulation node are called \emph{reticulation edges} or \emph{hybridization edges}.\\

{\bf Hybridization Networks.} A hybridization network $N$ for two rooted binary phylogenetic $\cX$-trees $T_1$ and $T_2$ is a rooted phylogenetic network on $\cX$ \emph{displaying} $T_1$ and $T_2$. More precisely, this means that for each tree $T\in\{T_1,T_2\}$ there exists a set $E'\subseteq E(N)$ of reticulation edges such that $T$ can be derived from $N$ by conducting the following steps. 
\begin{enumerate}
\item[(1)] Delete each edge from $N$ that is not contained in $E'$.
\item[(2)] Remove each node whose corresponding taxon is not contained in $\cX'$.
\item[(3)] Remove each unlabeled node of out-degree $0$ repeatedly.
\item[(4)] Suppress each node of both in- and out-degree $1$.  
\end{enumerate}
Notice that, for determine $E'$, it suffices to select at most one in-edge of each hybridization node in $N$.

From a biological point of view, this means that $N$ displays $T$ if each speciation event of $T$ is reflected by $N$. Moreover, each internal node of in-degree~$1$ represents a speciation event and each internal node providing an in-degree of at least~$2$ represents a reticulation event or, in terms of hybridization, a hybridization event. This means, in particular, that such a latter node represents an individual whose genome is a \emph{chimaera} of several parents. Thus, such a node $v$ of in-degree larger than or equal to $2$ is called a \emph{hybridization node} and each edge directed into $v$ is called a \emph{hybridization edge}.

Now, based on those hybridization nodes, the \emph{reticulation number} $r(N)$ of a hybridization network $N$ is defined by 
\begin{equation}
r(N)=\sum_{v\in V:\delta^-(v)>0}\left(\delta^-(v)-1\right)=|E|-|V|+1,
\label{04-eq-retNumBin}
\end{equation}
where $V$ denotes the node set and $E$ the edge set of $N$. Next, based on the definition of the reticulation number, for two rooted binary phylogenetic $\cX$-trees $T_1$ and $T_2$ the hybridization number $h(\{T_1,T_2\})$ is 
\begin{equation}
\text{min}\{r(N):\text{N is a hybridization network displaying }T_1\text{ and }T_2\}.
\label{04-eq-hNumBin}
\end{equation}
Lastly, we call a \emph{hybridization network} $N$ for two rooted binary phylogenetic $\cX$-trees $T_1$ and $T_2$ a \emph{minimum hybridization network} if $r(N)=h(\{T_1,T_2\})$.\\

{\bf Forests.} Let $T$ be a rooted nonbinary phylogenetic $\cX$-tree $T$. Then, we call any set of rooted nonbinary phylogenetic trees $\cF=\{F_1,\dots,F_k\}$ with $\cL(\cF)=\cX$ a \emph{forest on $\cX$}, if we have for each pair of trees $F_i$ and $F_j$ that $\cL(F_i)\cap\cL(F_j)=\emptyset$. Moreover, if additionally for each component $F$ in $\cF$ the tree $T|_{\cL(F)}$ equals $F$, we say that $\cF$ is a \emph{forest for $T$}. Lastly, let $\cF$ be a forest for a rooted binary phylogenetic $\cX$-tree $T$. Then, by $\overline \cF$ we refer to the forest that is obtained from $\cF$ by deleting each component only consisting of an isolated node as well as the element containing the node labeled by taxon $\rho$ if it contains at most one edge.\\

{\bf Agreement Forests.} For technical purpose, the definition of agreement forests is based on two rooted binary phylogenetic $\cX$-trees $T_1$ and $T_2$ whose roots are marked by a unique taxon $\rho\not\in\cX$ as follows. Let $r_i$ be the root of the tree $T_i$ with $i\in\{1,2\}$. Then, we first create a new node $v_i$ labeled by a new taxon $\rho\not\in\cX$ and then attach this node to $r_i$ by inserting the edge $(v_i,r_i)$. Notice that this case $v_1$ and $v_2$ is the new root of $T_1$ and $T_2$, respectively. Moreover, since we consider $\rho$ as being a new taxon, the taxa set of both trees is $\cX\cup\{\rho\}$.

Now, given two such marked rooted binary phylogenetic trees $T_1$ and $T_2$ on $\cX\cup\rho$, a set of components $\cF=\{F_{\rho},F_1,\dots,F_k\}$ is an \emph{agreement forest} for $T_1$ and $T_2$ if the following three conditions are satisfied.

\begin{enumerate} 
\item[(1)] Each component $F_i$ with taxa set $\cX_i$ equals $T_1|_{\cX_i}$ and $T_2|_{\cX_i}$. 
\item[(2)] There is exactly one component, denoted as $F_{\rho}$, with $\rho\in\cL(F_{\rho})$.
\item[(3)] Let $\cX_{\rho},\cX_1,\dots,\cX_{k}$ be the respective taxa sets of $F_{\rho},F_1,\dots,F_{k}$. All trees in $\{T_1(\cX_i)|i\in\{\rho,1,\dots,k\}\}$ as well as $\{T_2(\cX_i)|i\in\{\rho,1,\dots,k\}\}$ are node disjoint subtrees of $T_1$ and $T_2$, respectively.
\end{enumerate} 
An illustration of an agreement forest is given in Figure~\ref{fig-example}(b).

Lastly, an agreement forest for two rooted binary phylogenetic $\cX$-trees containing a minimum number of components is called a \emph{maximum agreement forest}. Now, based on maximum agreement forests, one can compute a minimum hybridization network if an additional constraint is satisfied, which is presented in the following. Notice that, from a biological point of view, this constraint is necessary since it prevents species from inheriting genetic material from their own offspring.\\

{\bf Modified ancestor descendant graphs.} Given two rooted binary phylogenetic $\cX$-trees $T_1$ and $T_2$ together with a forest $\cF=\{F_{\rho},F_1,\dots,F_k\}$ for $T_1$ (or $T_2$), the modified ancestor descendant graph $AG^*(T_1,T_2,\cF)$ with node set $\cF$ contains a directed edge $(F_i,F_j)$, $i\ne j$, if
\begin{enumerate} 
\item[(1)] the root of $T_1(\cL(F_i))$ is an ancestor of the root of $T_1(\cL(F_j))$,  
\item[(2)] or the root of $T_2(\cL(F_i))$ is an ancestor of the root of $T_2(\cL(F_j))$.
\end{enumerate}
An illustration of such an modified ancestor descendant graph is given in Figure~\ref{fig-example}(c).

Now, we say that $\cF$ is an \emph{acyclic agreement forest} if $\cF$ satisfies the conditions of an agreement forest and if the corresponding graph $AG^*(T_1,T_2,\cF)$ does not contain any \emph{directed} cycles. Again, each acyclic agreement forest of minimum size among all acyclic agreement forests is called a \emph{maximum acyclic agreement forest}. Note that the definition of an ancestor descendant graph given in both works \citet{Baroni2005} as well as \citet{Scornavacca2012} is different to the one presented above. Regarding the definitions of those two works, the set of components $\cF$ has to be an agreement forest for $T_1$ and $T_2$ and, thus, this definition is more strict than the one given for $AG^*(T_1,T_2,\cF)$ in which $\cF$ just has to be a forest for $T_1$ (or $T_2$). For a better illustration, in Figure~\ref{fig-example} we give an example of an acyclic agreement forest together with its underlying modified ancestor descendant graph.

\begin{figure}[t]
\centering
\includegraphics[scale=0.75]{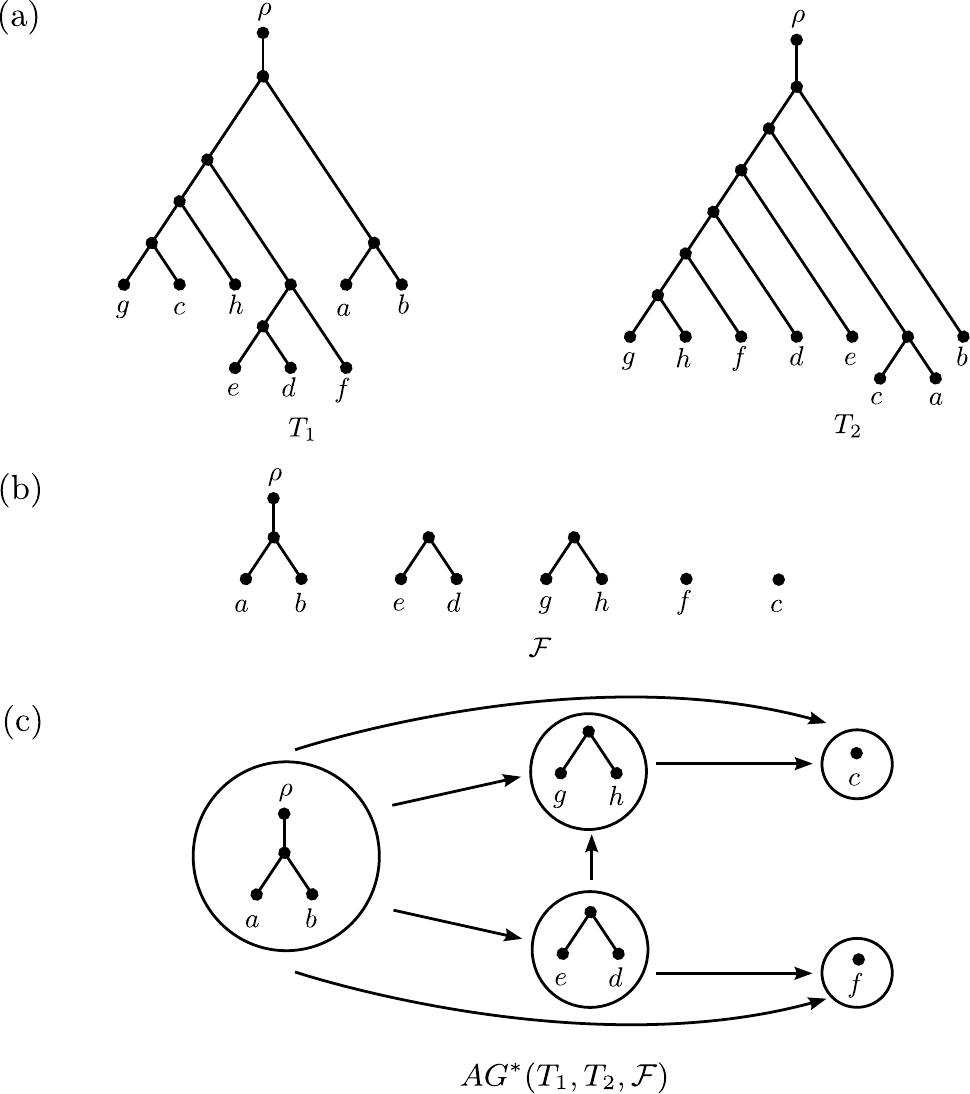}
\caption{(a) Two rooted binary phylogenetic $\cX$-trees $T_1$ and $T_2$ with taxa set $\cX=\{a,b,c,d,e,f,g,h,\rho\}$. (b) An acyclic agreement forest $\cF$ for $T_1$ and $T_2$. (c) The directed graph $AG^*(T_1,T_2,\cF)$ not containing any directed cycles and, thus, $\cF$ is acyclic. }
\label{fig-example}
\end{figure} 

Notice that the concept of maximum acyclic agreement forests for two rooted binary phylogenetic $\cX$-trees is of interest, since its size minus one corresponds to the hybridization number of these two trees, which is stated by Theorem \ref{theo-h} given in the paper of \citet{Baroni2005}. 

\begin{theorem}[\cite{Baroni2005}]
\label{theo-h}
Let $\cF$ be a maximum acyclic agreement forest of size $k$ for two rooted binary phylogenetic $\cX$-trees $T_1$ and $T_2$, then $h(T_1,T_2)=k$.
\end{theorem}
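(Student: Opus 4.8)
The plan is to prove the two inequalities $h(T_1,T_2) \le k$ and $h(T_1,T_2) \ge k$ separately, each by an explicit construction that translates between acyclic agreement forests and hybridization networks. The guiding principle is that cutting an edge in a forest corresponds to introducing a hybridization edge in a network, and that the acyclicity condition imposed on $AG^*(T_1,T_2,\cF)$ is precisely what guarantees that the reconstructed network contains no directed cycle. Throughout, I would keep careful track of the distinguished component $F_\rho$, since it carries the root marker $\rho$ and accounts for the off-by-one bookkeeping between the number of components of a forest and the number of hybridization edges of the associated network.

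For the direction $h(T_1,T_2) \le k$, I would start from the given maximum acyclic agreement forest $\cF = \{F_\rho, F_1, \dots, F_k\}$ and build a hybridization network $N$ displaying both $T_1$ and $T_2$ by reattaching the components $F_1, \dots, F_k$ onto the root component $F_\rho$. Concretely, each non-root component $F_i$ is glued in by creating a hybridization node at the root of $T_1(\cL(F_i))$ and the root of $T_2(\cL(F_i))$ and connecting these by two incoming edges that reflect how $F_i$ sits in $T_1$ and in $T_2$; this contributes exactly one unit to $r(N)$ per component. The crucial point is that a topological ordering of the acyclic graph $AG^*(T_1,T_2,\cF)$ dictates an order of reattachment in which no glued edge ever points from a descendant back to an ancestor, so that $N$ inherits acyclicity from $AG^*$. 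Counting via Equation~\eqref{04-eq-retNumBin}, the resulting network satisfies $r(N) = k$, whence $h(T_1,T_2) \le k$.

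For the reverse direction $h(T_1,T_2) \ge k$, I would take a minimum hybridization network $N$ with $r(N) = h(T_1,T_2)$ and recover an acyclic agreement forest $\cF'$ by cutting, for each hybridization node, exactly one of its incoming reticulation edges and then suppressing each node of both in- and out-degree $1$. Using the display property of $N$ for $T_1$ and for $T_2$, each resulting tree component restricts correctly in both input trees, so conditions~(1)--(3) of an agreement forest hold; moreover, the ancestor relations among components in $T_1$ and in $T_2$ are all realised by directed paths in $N$, so a directed cycle in $AG^*(T_1,T_2,\cF')$ would force a directed cycle in $N$, contradicting that $N$ is a network. Hence $\cF'$ is an \emph{acyclic} agreement forest whose number of non-root components is at most $r(N)$, and since $k$ is the \emph{minimum} number of non-root components over all acyclic agreement forests, we obtain $k \le r(N) = h(T_1,T_2)$.

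The main obstacle, in both directions, is establishing the exact equivalence between directed cycles of $AG^*(T_1,T_2,\cF)$ and directed cycles of the candidate network $N$; the agreement-forest conditions~(1)--(3) are comparatively mechanical to verify, but the acyclicity transfer requires showing that the ancestor edges of the modified ancestor descendant graph capture \emph{all} the ordering constraints that reattachment must respect, with no spurious and no missing constraints. A secondary technical point is the consistent accounting of $F_\rho$ and of any components reduced to isolated nodes (as handled by the $\overline{\cF}$ operation), so that the component count of a forest and the reticulation number of the matching network agree up to the intended normalisation rather than differing by one.
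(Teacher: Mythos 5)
First, note that the paper itself gives no proof of this statement: Theorem~\ref{theo-h} is imported verbatim from \citet{Baroni2005}, so the only thing to compare your sketch against is the argument in that reference. Your overall plan --- two inequalities, each by an explicit translation between acyclic agreement forests and networks, with the acyclicity of $AG^*(T_1,T_2,\cF)$ supplying a topological order for reattachment --- is the right plan, and your forward direction ($h(T_1,T_2)\le k$, gluing the $k$ non-root components onto $F_\rho$ in an order dictated by a topological sort of $AG^*$, each attachment contributing one unit to $r(N)$) matches the standard construction. Your reading of $k$ as the number of non-root components is also the right normalisation, consistent with the algorithm returning $|\cF'|-1$.

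The reverse direction, however, contains a concrete error. You propose to recover a forest from a minimum network $N$ by ``cutting, for each hybridization node, exactly one of its incoming reticulation edges.'' For a binary network every hybridization node has in-degree $2$, so after this operation every non-root node still has in-degree exactly $1$; since $N$ is acyclic and has a unique in-degree-$0$ node, following parent edges from any node terminates at the root, and the result is a single spanning tree of $N$ --- precisely the ``display'' operation that extracts one embedded tree. You therefore obtain one component, not $r(N)+1$ of them, and that single component cannot satisfy condition~(1) of an agreement forest for two distinct trees $T_1\ne T_2$. The correct construction (as in \citet{Baroni2005}) deletes \emph{all} edges directed into each hybridization node, turning every such node into the root of its own component and yielding exactly $1+r(N)$ components, which are then restricted to $\cX$ and cleaned up; acyclicity of $AG^*$ then follows because the ancestor relations between component roots are all witnessed by directed paths in the DAG $N$. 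I would also caution that verifying conditions~(1)--(3) for the resulting components is not as ``mechanical'' as you suggest --- one must use the display property for both trees simultaneously, and handle components with no labelled leaves --- but the edge-cutting step is the part of your argument that, as written, fails outright.
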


{\bf Cherries.} For a rooted binary phylogenetic $\cX$-tree $T$ we call two of its leaves $a$~and~$c$ a \emph{cherry}, denoted by $\{\cL(a),\cL(c)\}$, if both nodes have the same parent. Moreover, let $R$ and $S$ be two rooted binary phylogenetic trees on $\cX_R$ and $\cX_S$, respectively, so that $\cX_R\subseteq\cX_S$, and let $\cF$ be a forest for $S$. Then, a cherry $\{\cL(a),\cL(c)\}$ in $R$ is called a \emph{common cherry of $R$ and $\cF$}, if there exists a cherry $\{\cL(a'),\cL(c')\}$ in $\cF$ with $\cL(a')=\cL(a)$ and $\cL(c')=\cL(c)$. Otherwise, the cherry $\{\cL(a),\cL(c)\}$ in $R$ is called \emph{contradicting cherry of $R$ and $\cF$}. Moreover, in order to ease reading, if there exists a component in $\cF$ containing two leaves labeled by $\cL(a)$ and $\cL(c)$, respectively, we write $a\sim_{\cF}c$. Otherwise, if such a component does not exist, we write $a\not\sim_{\cF}c$.

Moreover, let $\cF$ be a forest for a rooted binary phylogenetic $\cX$-tree and let $\{\cL(a),\cL(c)\}$ be a cherry that is contained in $\cF$ in which $e$ denotes the in-edge of its parent $p$. Then, if $e$ exists, by $\cF\div\{\cL(a),\cL(c)\}$ we simply refer to the forest $\cF-\{e\}$ and, otherwise, if $p$ has in-degree $0$, to $\cF$. Furthermore, let $\cP$ be the path connecting $a$ and $c$ in $\cF$. Then, the \emph{set of pendant edges for $\{\cL(a),\cL(c)\}$} contains each edge $(v,w)$ with $v\in V(\cP)\setminus\{a,c\}$ and $w\not\in V(\cP)$, where $V(\cP)$ denotes the set of nodes in $\cP$. Notice that in general there exist several, namely precisely $|V(\cP)|-3$, edges satisfying the condition of such an edge.\\

{\bf Cherry Reductions.} Let $\cF$ be a forest for a rooted binary phylogenetic $\cX$-tree and let $\{\cL(a),\cL(c)\}$ be a cherry of a component $F_i$ in $\cF$. Then, a \emph{cherry reduction}, according to a cherry $\{\cL(a),\cL(c)\}$ in one of its components $F_i$, implies the following two operations.

\begin{enumerate} 
\item[(1)] The parent of the two nodes $a$ and $c$ is labeled by $\{\cL(a),\cL(c)\}$.
\item[(2)] Both nodes $a$ and $c$ together with their adjacent edges are deleted from $F_i$. 
\end{enumerate}

Throughout the algorithm, such a reduction step of a common cherry $\{\cL(a),\cL(c)\}$ in $\cF$ is shortly denoted by $\cF[\{\cL(a),\cL(c)\}\rightarrow \cL(a)\cup\cL(c)]$. Furthermore, the reverse notation $\cF[\cL(a)\cup\cL(c)\rightarrow \{\cL(a),\cL(c)\}]$ describes the insertion of two new taxa labeled by $\cL(a)$ and $\cL(c)$, respectively, together with removing the label $\cL(a)\cup\cL(c)$ from its parent. Notice that, for applying such insertion steps, one has to keep track of the preceding reduction steps.

Equivalently, for a cherry $\{\cL(a),\cL(c)\}$ in a rooted binary phylogenetic $\cX$-tree $R$ we write $R[\{\cL(a),\cL(c)\}\rightarrow \cL(a)\cup\cL(c)]$ to denote a cherry reduction of $\{\cL(a),\cL(c)\}$ in $R$.

\section{The Algorithm \textsc{allMAAFs}}
\label{14-sec-allMAAFs}

In this section, we present the algorithm \textsc{allMAAFs} that was first published in the work of Scornavacca~\emph{et al.}~\citet{Scornavacca2012}. To increase its readability, we decided to split the original algorithm \textsc{allMAAFs} into six parts (cf.~Alg.~\ref{14-alg-main}--\ref{14-alg-contra_cherry}). Note that, apart from its graphical representation, our presentation of the algorithm \textsc{allMAAFs} together with its terminology in general adheres to the original algorithm. 

\begin{algorithm}[]
\scriptsize
\KwData{Two rooted binary phylogenetic $\cX$-trees $S$ and $T$, a rooted binary phylogenetic tree $R$ and a forest $\cF$ such that $\cL(R)=\cL(\overline{\cF})$ and $\cL(T)=\cL(\cF)$, an integer $k$, and a list $M$ that contains information of previously reduced cherries.}

\KwResult{A set $\boldsymbol{\cF}$ of forests for $\cF$ and an integer. In particular, if $\cF=T$, $R=S$, $M=\emptyset$, and $k\geq h(S,T)$ is the input to {\sc allMAAFs}, the output precisely consists of all maximum-acyclic-agreement forests for $S$ and $T$ and their respective hybridization number.}

\If{$k<0$}{\Return $(\emptyset,k-1)$\;} 
\If{$|\cL(R)|=0$}{
	$\cF' \gets$ \textsc{cherryExpansion}($\cF$, $M$)\;
	\If {$AG(S,T,\cF')$ \textnormal{is acyclic}}{
		\Return ($\cF'$, $|\cF'|-1$)\;
	}
	\Else{
		\Return $(\emptyset,k-1)$\;
	}
}
\Else{
	let $\{\cL(a),\cL(c)\}$ be a cherry of $R$\;
	\If{$\{\cL(a),\cL(c)\}$ \textnormal{ is a common cherry of} $R$ \textnormal{and}
$\cF$}{
		\Return (\textsc{ProcessCommonCherry}($S$, $T$, $R$, $\cF$, $k$,
$M$, $\{\cL(a),\cL(c)\}$))\;
	}
	\If{$k\neq ( |\cF|-1)$ \textnormal{or} $\{\cL(a),\cL(c)\}$ \textnormal{ is a
contradicting cherry of} $R$ \textnormal{and} $\cF$ }{
		\Return (\textsc{ProcessContradictingCherry}($S$, $T$, $R$,
$\cF$, $k$, $M$, $\{\cL(a),\cL(c)\}$))\;
	}
}
\caption{\textsc{allMAAFs}($S$, $T$, $R$, $\cF$, $k$, $M$)}
\label{14-alg-main}
\end{algorithm}

\begin{algorithm}[]
\scriptsize
\While{M \em{is not empty}}{
 $M \leftarrow \text{remove last element of M, say} \{\cL(a),\cL(c)\}$\;
 $\cF \leftarrow \cF[\cL(a)\cup\cL(c) \rightarrow \{\cL(a),\cL(c)\}]$\;
}
\Return{$\cF$}
\caption{ \textsc{cherryExpansion}$(\cF,M)$}
\label{14-alg-exp}
\end{algorithm}

\begin{algorithm}[]
\begin{footnotesize}
$M'\gets \mbox{Add } \{\cL(a), \cL(c)\}\mbox{ as last element of } M$\;
 $R' \gets R[\{\cL(a),\cL(c)\}\rightarrow \cL(a)\cup \cL(c)]$\;
$\cF'\gets\cF[\{\cL(a),\cL(c)\}\rightarrow \cL(a)\cup \cL(c) ]$\;
\Return $(R',\cF',M')$
\caption{ \textsc{cherryReduction}$(R,\cF,M,\{\cL(a),\cL(c)\})$\label{a:subRed}} 
\end{footnotesize}
\end{algorithm}

\begin{algorithm}[]
\scriptsize
$(R',\cF',M') \gets$ \textsc{cherryReduction}($R$, $\cF$, $M$, $\{\cL(a),\cL(c)\}$)\;
($\boldsymbol{\cF_r}$, $k_r$) $\gets$ \textsc{allMAAFs}($S$, $T$,
$R'|_{\cL(\overline{\cF'})}$, $\cF'$, $k$, $M'$)\;
\If{$\boldsymbol{\cF_r}\neq \emptyset$}{
	$k\gets \min(k, k_r)$\;
}
\lIf{$(k = |\cF|-1)$}{
	\Return ($\boldsymbol{\cF_r}$, $k$)\;
}
\Else{
	($\boldsymbol{\cF_a}$, $k_a$) $\gets$ \textsc{allMAAFs}($S$, $T$, $R|_{\cL(\overline{\cF-\{e_a\}})}$, $\cF-\{e_a\}$, $k-1$, $M$)\;
	\If{$\boldsymbol{\cF_a}\neq \emptyset$}{$k \gets \min(k, k_a-1)$\;}
	\lIf{$(k_a -1 = k)$}{
		$\boldsymbol{\cF} \leftarrow \boldsymbol{\cF} \cup \boldsymbol{\cF_a}$\;
	}
	($\boldsymbol{\cF_c}$, $k_c$) $\gets$ \textsc{allMAAFs}($S$, $T$, $R|_{\cL(\overline{\cF-\{e_c\}})}$, $\cF-\{e_c\}$, $k-1$, $M$)\;
	\If{$\boldsymbol{\cF_c}\neq \emptyset$}{
		$k \gets \min(k,k_c -1)$\;
	}
	$\boldsymbol{\cF} \gets \emptyset$\;
	\lIf{$(k_a -1 = k)$}{
		$\boldsymbol{\cF} \leftarrow \boldsymbol{\cF_a}$\;
	}
	\lIf{$(k_c -1 = k)$}{
		$\boldsymbol{\cF} \leftarrow \boldsymbol{\cF} \cup \boldsymbol{\cF_c}$\;
	}
	\lIf{$(k_r = k)$}{
		$\boldsymbol{\cF} \leftarrow \boldsymbol{\cF} \cup \boldsymbol{\cF_r}$\;
	}
	\Return ($\boldsymbol{\cF}$, $k$)\;
}
\caption{ \textsc{ProcessCommonCherry}($S$, $T$, $R$, $\cF$, $k$, $M$,
$\{\cL(a),\cL(c)\}$)}
\label{14-alg-common_cherry}
\end{algorithm}

\begin{algorithm}[]
\scriptsize
($\boldsymbol{\cF_a}$, $k_a$) $\gets$ \textsc{allMAAFs}($S$, $T$,
$R|_{\cL(\overline{\cF-\{e_a\}})}$, $\cF-\{e_a\}$, $k-1$, $M$)\;
\If{$\boldsymbol{\cF_a}\neq \emptyset$}{$k \gets \min(k, k_a-1)$\;}
($\boldsymbol{\cF_c}$, $k_c$) $\gets$ \textsc{allMAAFs}($S$, $T$,
$R|_{\cL(\overline{\cF-\{e_c\}})}$, $\cF-\{e_c\}$, $k-1$, $M$)\;
\If{$\boldsymbol{\cF_c}\neq \emptyset$}{$k \gets \min(k,k_c -1)$\;}
$\boldsymbol{\cF} \gets \emptyset$\;
\If{$a \nsim_{\cF} c$ }{				 
	\lIf{$(k_a -1= k)$}{
		$\boldsymbol{\cF} \leftarrow \boldsymbol{\cF_a}$\;
	}
	\lIf{$(k_c -1= k)$}{
		$\boldsymbol{\cF} \leftarrow \boldsymbol{\cF} \cup \boldsymbol{\cF_c}$\;
	}
	\Return ($\boldsymbol{\cF}$, $k$)\;
}
\Else{
	($\boldsymbol{\cF_B}$, $k_B$) $\gets$ \textsc{allMAAFs}($S$, $T$,
$R|_{\cL(\overline{\cF-\{e_B\}})}$, $\cF-\{e_B\}$, $k-1$, $M$)\;		
	
	\If{$\boldsymbol{\cF_B}\neq \emptyset$}{
		$k \gets \min(k, k_B-1 )$\;
	} 
	\lIf{$(k_a -1 = k)$}{
		$\boldsymbol{\cF} \leftarrow \boldsymbol{\cF_a}$\;
	}
	\lIf{$(k_B -1 = k)$}{
		$\boldsymbol{\cF} \leftarrow \boldsymbol{\cF} \cup \boldsymbol{\cF_B}$\;
	}
	\lIf{$(k_c -1 = k)$}{
		$\boldsymbol{\cF} \leftarrow \boldsymbol{\cF} \cup \boldsymbol{\cF_c}$\;
	}
	\Return ($\boldsymbol{\cF}$, $k$)\;
}
\caption{ \textsc{ProcessContradictingCherry}($S$, $T$, $R$, $\cF$, $k$, $M$,
$\{\cL(a),\cL(c)\}$)}
\label{14-alg-contra_cherry}
\end{algorithm}

\section{Modifications to \textsc{allMAAFs}}
\label{14-sec-mod}

In this section, we present two modifications of the algorithm \textsc{allMAAFs} that, on the one hand, do not improve its theoretical runtime but, one the other hand, significantly improve its practical running time, which will be indicated by a simulation study reported in Section~\ref{14-sec-simulation}. 

The first modification improves the processing of a certain type of contradicting cherry whereas for the other two modifications only the processing step of a common cherry is of interest.
 
Given a contradicting cherry $\{\cL(a),\cL(c)\}$ for $R$ and $\cF$ with $a\sim_{\cF}c$, the original algorithm conducts three recursive calls. One by recursively calling the algorithm with $\cF-\{e_a\}$ and $R|_{\overline{\cF-\{e_a\}}}$, one by recursively calling the algorithm with $\cF-\{e_c\}$ and $R|_{\overline{\cF-\{e_c\}}}$ and one by recursively calling the algorithm with $\cF-\{e_B\}$ and $R|_{\overline{\cF-\{e_B\}}}$, in which $e_a$ and $e_c$ refers to the in-edge of leaf $a$ and $c$, respectively, in $\cF$ and $e_B$ refers to an in-edge of a subtree lying on the path connecting $a$ and $c$ in $\cF$. Regarding the latter recursive call, in the upcoming part of this work we will show that, in order to compute all maximum acyclic agreement forests, instead of cutting just one in-edge $e_B$ one can cut all of those in-edges all at once. 

Moreover, given a common cherry $\{\cL(a),\cL(c)\}$ for $R$ and $\cF$, the original algorithm \textsc{allMAAFs} always branches into three new computational paths; one path corresponding to the cherry reduction of $\{\cL(a),\cL(c)\}$ and two corresponding to the deletion of both in-edges of the two leaves $a$ and $c$ (cf.~Alg.~\ref{14-alg-common_cherry}, line 7--15). To understand the sense of our two modifications, one has to take the necessity of these two additional edge deletions into account. 

Therefor, we demonstrate a specific scenario that is outlined in Figure~\ref{14-fig-example} showing two phylogenetic trees $T_1$ and $T_2$ as well as a maximum acyclic agreement $\cF=\{((a,b),\rho),(e,d),(g,h),c,f\}$ for those two trees. By running the algorithm \textsc{allMAAFs} for $T_1$ and $T_2$ without deleting in-edges of a common cherry, the given maximum acyclic agreement forest $\cF$ is never computed. This is due to the fact that, once the component $F_i=((g,h),f)$ occurs on any computational path, $F_i$ will be part of the resulting agreement forest. Because of $F_i$ and the component $(e,d)$, such a resulting agreement forest is never acyclic and, thus, does not satisfy the conditions of an acyclic agreement forest. However, by cutting instead of contracting the common cherry $\{(g,h),f\}$, the resulting agreement forest turns into the maximum acyclic agreement forest $\cF$. This example implies that sometimes the deletion of in-edges corresponding to taxa of a common cherry is necessary, which is, however, in practice not often the case, and, thus, the original algorithm \textsc{allMAAFs} usually produces a lot of additional unnecessary computational steps compared with our second modification offering a different solution for such a scenario. 

\begin{figure}
\centering
\includegraphics[scale=0.75]{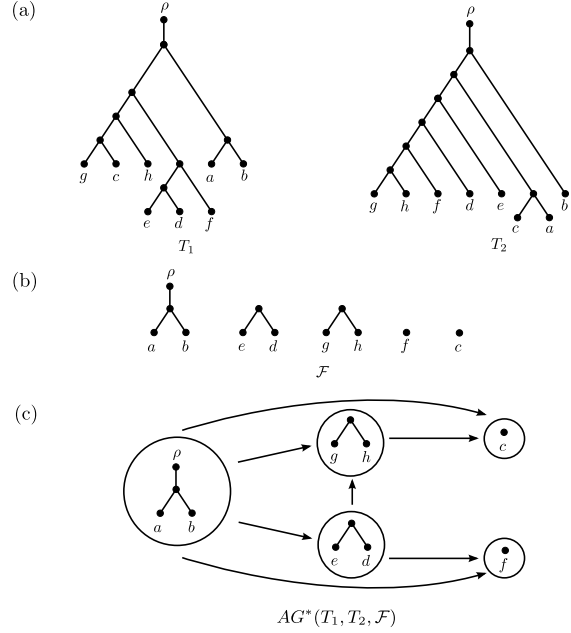}
\caption{(a) Two rooted binary phylogenetic $\cX$-trees $T_1$ and $T_2$ with taxa set $\cX=\{a,b,c,d,e,f,g,h,\rho\}$. (b) An acyclic agreement forest $\cF$ for $T_1$ and $T_2$. (c) The directed graph $AG^*(T_1,T_2,\cF)$ not containing any directed cycles and, thus, $\cF$ is acyclic. }
\label{14-fig-example}
\end{figure} 

\subsection{The Algorithm \scAMa}
\label{sec-mod3}

Our first algorithm \scAMa~is a modification of the original algorithm \textsc{allMAAFs} improving the processing of contradicting cherries. Let $\{\cL(a),\cL(c)\}$ be a contradicting cherry of $R$ and $\cF$ such that $a\sim_{\cF}c$ holds and let $e_B$ be an edge that is defined as follows. Let $\cP$ be the path connecting $a$ and $c$ in $ \cF$. Then, the edge set $E_B$ contains each edge $e_B=(v,w)$ with $v\in V(\cP)\setminus\{a,c\}$ and $w\not\in V(\cP)$, where $V(\cP)$ denotes the set of nodes of $\cP$. Note that in this case there exist precisely $|V(\cP)-3|$ edges satisfying the conditions of such an edge $e_B$. Now, if such a cherry $\{\cL(a),\cL(c)\}$ occurs, the original algorithm \textsc{allMAAFs} branches into a computational path by cutting exactly one of those edges in $E_B$. We will show, however, that in this case the whole set $E_B$ can be cut from $\cF$ all at once without having an impact on the computation of all maximum acyclic agreement forests for both input trees. In Algorithm~\ref{14-alg-dagger_contra_cherry}, we give a pseudo code of \scAMa. Note that, for the sake of clarity, we just present the modified part in respect of the original algorithm dealing with the processing of contradicting cherries. The remaining parts are unmodified and can be looked up in Section \ref{14-sec-allMAAFs}.

\begin{algorithm}[]
\scriptsize
($\boldsymbol{\cF_a}$, $k_a$) $\gets$ \scAMa($S$, $T$,
$R|_{\cL(\overline{\cF-\{e_a\}})}$, $\cF-\{e_a\}$, $k-1$, $M$)\;
\If{$\boldsymbol{\cF_a}\neq \emptyset$}{$k \gets \min(k, k_a-1)$\;}
($\boldsymbol{\cF_c}$, $k_c$) $\gets$ \scAMa($S$, $T$,
$R|_{\cL(\overline{\cF-\{e_c\}})}$, $\cF-\{e_c\}$, $k-1$, $M$)\;
\If{$\boldsymbol{\cF_c}\neq \emptyset$}{$k \gets \min(k,k_c -1)$\;}
$\boldsymbol{\cF} \gets \emptyset$\;
\If{$a \nsim_{\cF} c$ }{				 
	\lIf{$(k_a -1= k)$}{
		$\boldsymbol{\cF} \leftarrow \boldsymbol{\cF_a}$\;
	}
	\lIf{$(k_c -1= k)$}{
		$\boldsymbol{\cF} \leftarrow \boldsymbol{\cF} \cup \boldsymbol{\cF_c}$\;
	}
	\Return ($\boldsymbol{\cF}$, $k$)\;
}
\Else{
	\HiLi($\boldsymbol{\cF_B}$, $k_B$) $\gets$ \scAMa($S$, $T$, $R|_{\cL(\overline{\cF-E_B})}$, $\cF-E_B$, $k-|E_B|$, $M$)\;		
	
	\If{$\boldsymbol{\cF_B}\neq \emptyset$}{
		$k \gets \min(k, k_B-1)$\;
	} 
	\lIf{$(k_a -1 = k)$}{
		$\boldsymbol{\cF} \leftarrow \boldsymbol{\cF_a}$\;
	}
	\HiLi\lIf{$(k_B -1 = k)$}{
		$\boldsymbol{\cF} \leftarrow \boldsymbol{\cF} \cup \boldsymbol{\cF_B}$\;
	}
	\lIf{$(k_c -1 = k)$}{
		$\boldsymbol{\cF} \leftarrow \boldsymbol{\cF} \cup \boldsymbol{\cF_c}$\;
	}
	\Return ($\boldsymbol{\cF}$, $k$)\;
}
\caption{ \textsc{ProcessContradictingCherry}$^1$($S$, $T$, $R$, $\cF$, $k$, $M$,
$\{\cL(a),\cL(c)\}$)}
\label{14-alg-dagger_contra_cherry}
\end{algorithm}

\subsection{The Algorithm \scAMc}
\label{14-sec-star}

Our third algorithm \scAMc~is again a modification of our first algorithm \scAMa~and is based on a tool turning agreement forests into acyclic agreement forests. This tool, published by Whidden \emph{et al.}~\citet{Whidden2011}, is based on the concept of an \emph{expanded cycle graph} refining cyclic agreement forest. Due to such additional refinement steps, which are performed right after the computation of each maximum agreement forest, both cutting steps for processing a common cherry can be omitted. The simulation study in Section~\ref{14-sec-simulation} indicates that this refinement step is efficient enough so that this modification in general outperforms the original algorithm \textsc{allMAAFs} as well as our first modification \scAMa. In Algorithm~\ref{14-alg-mod2-main}~and~\ref{14-alg-mod2-common_cherry}, we present a pseudo code describing the algorithm \scAMc. Again, for the sake of clarity, we restrict the presentation to only those parts that are modified in respect to \scAMa. Moreover, we omit a description of the subroutine conducting the refinement step, denoted by \textsc{RefineForest}, as it can be looked up in the work of Whidden \emph{et al.}~\citet{Whidden2011}. Notice that, due to this refinement step, an implementation of this modification is quite more expensive compared with our first modified algorithm \scAMa. 

\begin{algorithm}[]
\scriptsize
\If{$k<0$}{\Return $(\emptyset,k-1)$\;} 
\If{$|\cL(R)|=0$}{
	$\cF' \gets$ \textsc{cherryExpansion}($\cF$, $M$)\;
	\HiLi$\boldsymbol\cF'' \gets$ \textsc{RefineForest}($\cF'$)\;
	\HiLi$\boldsymbol\cF \gets \emptyset$\;
	\HiLi\ForEach{$\cF''\in\boldsymbol\cF''$}{
		\HiLi\If{$|\cF''|=k$}{
			\HiLi$\boldsymbol\cF\gets\boldsymbol\cF\cup\cF''$\;
		}
		\HiLi\ElseIf{$|\cF''|<k$}{
			\HiLi$k\gets|\cF''|$\;
			\HiLi$\boldsymbol\cF\gets\{\cF''\}$\;
		}
	}
	\HiLi\Return $(\boldsymbol\cF,k-1)$\;
}
\Else{
	let $\{\cL(a),\cL(c)\}$ be a cherry of $R$\;
	\If{$\{\cL(a),\cL(c)\}$ \textnormal{ is a common cherry of} $R$ \textnormal{and} $\cF$}{
		\HiLi\Return (\textsc{ProcessCommonCherry}$^3$($S$, $T$, $R$, $\cF$, $k$, $M$, $\{\cL(a),\cL(c)\}$))\;
	}
	\If{$k\neq ( |\cF|-1)$ \textnormal{or} $\{\cL(a),\cL(c)\}$ \textnormal{ is a contradicting cherry of} $R$ \textnormal{and} $\cF$ }{
		\Return (\textsc{ProcessContradictingCherry}($S$, $T$, $R$, $\cF$, $k$, $M$, $\{\cL(a),\cL(c)\}$))\;
	}
}
\caption{\scAMc($S$, $T$, $R$, $\cF$, $k$, $M$)}
\label{14-alg-mod2-main}
\end{algorithm}

\begin{algorithm}[]
\scriptsize
$(R',\cF',M') \gets$ \textsc{cherryReduction}($R$, $\cF$, $M$, $\{\cL(a),\cL(c)\}$)\;
($\boldsymbol{\cF_r}$, $k_r$) $\gets$ \scAMc($S$, $T$,
$R'|_{\cL(\overline{\cF'})}$, $\cF'$, $k$, $M'$)\;
\lIf{$(k_r=k)$}{
\Return ($\boldsymbol{\cF_r}$, $k$)\;
}
\Return ($\emptyset$, $k$)\;
\caption{ \textsc{ProcessCommonCherry}$^3$($S$, $T$, $R$, $\cF$, $k$, $M$,
$\{\cL(a),\cL(c)\}$)}
\label{14-alg-mod2-common_cherry}
\end{algorithm}

\clearpage
\section{Proofs of Correctness}
\label{14-sec-proof}

In this section, we give formal proofs showing the correctness of all two modified algorithms presented in Section~\ref{14-sec-mod}. In a first step, however, we give some further definitions that are crucial for what follows.

\subsection{The algorithm{\sc~processCherries}}

In the following, we introduce the algorithm{\sc~processCherries}, which has already been utilized in the paper of Scornavacca \emph{et al.}~\citet{Scornavacca2012}. This algorithm is a simplified version of the algorithm \textsc{allMAAFs} describing one of its computational paths by a list of \emph{cherry actions}. Notice that this algorithm is a major tool that will help us to establish the correctness of our two modified algorithms.\\

\textbf{Cherry actions.} Let $R$ be a rooted binary phylogenetic $\cX$-tree and let $\cF$ be a forest for $R$. Then, a cherry action $\fc=(\{\cL(a),\cL(c)\},e)$ is a tuple containing a set $\{\cL(a),\cL(c)\}$ of two taxa of two leaves $a$ and $c$ as well as an edge $e$. We say that $\wedge$ is a \emph{cherry action for $R$ and $\cF$}, if $\{\cL(a),\cL(c)\}$ is a cherry of $R$ and if, additionally, one of the following three conditions is satisfied.

\begin{enumerate}
\item[(1)] Either $\{\cL(a),\cL(c)\}$ is a common cherry of $R$ and $\cF$ and $e \in \{\emptyset,e_a,e_c\}$,
\item[(2)] or $\{\cL(a),\cL(c)\}$ is a contradicting cherry of $R$ and $\cF$ with $a\not\sim_{\cF}c$ and $e \in \{e_a,e_c\}$,
\item[(3)] or $\{\cL(a),\cL(c)\}$ is a contradicting cherry of $R$ and $\cF$ with $a\sim_{\cF}c$ and $e \in \{e_a,e_B,e_c\}$.
\end{enumerate}

In this context, $e_a$ and $e_c$ is an edge in $\cF$ adjacent to both leaves $a$ and $c$, respectively. Moreover, $e_B$ is part of the set of pendant edges for $\{\cL(a),\cL(c)\}$.\\

\textbf{Cherry lists.} Now, given two rooted binary phylogenetic $\cX$-trees $T_1$ and $T_2$, we say that $\fcL$ is a \emph{cherry list for $T_1$ and $T_2$}, if, while calling \textsc{processCherries}$(T_1,\{T_2\},\fcL)$, in the $i$-th iteration $\wedge_i$ is a cherry action for $R_i$ and $\cF_i$. Note that, if $\fcL$ is not a cherry list for $T_1$ and $T_2$, calling \textsc{processCherries}$(T_1,\{T_2\},\fcL)$ (cf.~Alg.~\ref{14-alg-proc}) returns the empty set.

\begin{algorithm}[]
\scriptsize
$M\leftarrow\emptyset$\;
\ForEach{$i \in 1,\dots,n$}{
 \If{\em{$\fc_i$ is a cherry action for $R$ and $\cF$}}{
 	$(\{\cL(a),\cL(c)\},e_i)\leftarrow \fc_i$\;
	\If{$e_i=\emptyset$}{
 		$M\gets \mbox{Add } \{\cL(a), \cL(c)\}\mbox{ as last element of } M$\;
		$R \gets R[\{\cL(a),\cL(c)\}\rightarrow \cL(a)\cup \cL(c)]$\;
		$\cF\gets\cF[\{\cL(a),\cL(c)\}\rightarrow \cL(a)\cup \cL(c) ]$\;
 	}
 	\Else{ 
 		$\cF\leftarrow\cF-\{e_i\}$\;
 	 	$R\leftarrow R|_{\cL(\overline\cF)}$\;
 	}
 }
 \Else{
 \Return{$\emptyset$}\;
 }
}
\While{$M$ {\em is not empty}}{ 
 $M\gets \mbox{remove the last element, say $\{\cL(a), \cL(c)\}$, from } M$\;
	$\cF\gets \cF[\cL(a)\cup \cL(c) \rightarrow \{\cL(a),\cL(c)\}]$\;
}
\Return{$R,\cF,M$}
\caption{\textsc{processCherries}$(R,\cF,\fcL=(\fc_1,\dots,\fc_n))$}
\label{14-alg-proc}
\end{algorithm}

Further details as well as an example of {\sc~processCherries} can be found in the work of Scornavacca \emph{et al.}~\citet{Scornavacca2012}.\\

Next, we will introduce the algorithm \scPCa. This algorithm is a simplified version of the algorithm \scAMa~describing one of its computational paths by a list of \emph{extended cherry actions}. Notice that this algorithm is a major tool that will help us to establish the correctness of our first modified algorithm \scAMa.\\

\textbf{Extended cherry action.} Let $R$ be a rooted binary phylogenetic $\cX$-tree and let $\cF$ be a forest for $R$. Then, an extended cherry action $\fc=(\{\cL(a),\cL(c)\},E')$ is a tuple containing a set $\{\cL(a),\cL(c)\}$ of two taxa of two leaves $a$ and $c$ as well as an edge set $E'$. We say that $\wedge$ is an extended cherry action for $R$ and $\cF$, if $\{\cL(a),\cL(c)\}$ is a contradicting cherry of $R$ and $\cF$ with $a\sim_{\cF}c$ and $E'=E_B$, where $E_B$ refers to the set of pendant edges for $\{\cL(a),\cL(c)\}$ in $\cF$.\\

\textbf{Extended cherry list.} Now, given two rooted binary phylogenetic $\cX$-trees $T_1$ and $T_2$, we say that $\fcL$ is an \emph{extended cherry list for $T_1$ and $T_2$}, if, while calling \scPCa$(T_1,\{T_2\},\fcL)$, in the $i$-th iteration $\wedge_i$ is either a cherry action or an extended cherry action for $R_i$ and $\cF_i$. Note that, if $\fcL$ is not an extended cherry list for $T_1$ and $T_2$, calling \scPCa$(T_1,\{T_2\},\fcL)$ (cf.~Alg.~\ref{14-alg-dagger_proc}) returns the empty set. \\

\begin{algorithm}[]
\scriptsize
$M\leftarrow\emptyset$\;
\ForEach{$i \in 1,\dots,n$}{
	\If{\em{$\fc_i$ is a cherry action for $R$ and $\cF$}}{
 	 	$(\{\cL(a),\cL(c)\},e_i)\leftarrow \fc_i$\;
 	 	\If{$e_i=\emptyset$}{
 	 		$M\gets \mbox{Add } \{\cL(a), \cL(c)\}\mbox{ as last element of } M$\;
			$R \gets R[\{\cL(a),\cL(c)\}\rightarrow \cL(a)\cup \cL(c)]$\;
			$\cF\gets\cF[\{\cL(a),\cL(c)\}\rightarrow \cL(a)\cup \cL(c) ]$\;
 		 }
 	 	\Else{ 
 	 		$\cF\leftarrow\cF-\{e_i\}$\;
 	 		$R\leftarrow R|_{\cL(\overline\cF)}$\;
 		 }
 	}
 	\ElseIf{\em{$\fc_i$ is an extended cherry action for $R$ and $\cF$}}{
 		$(\{\cL(a),\cL(c)\},E_i)\leftarrow \fc_i$\;
 		$\cF\leftarrow\cF-E_i$\;
 	 	$R\leftarrow R|_{\cL(\overline\cF)}$\;
 	}
 	\Else{
 	 \Return{$\emptyset$}\;
 	}
}
\While{$M$ {\em is not empty}}{ 
 $M\gets \mbox{remove the last element, say $\{\cL(a), \cL(c)\}$, from } M$\;
	$\cF\gets \cF[\cL(a)\cup \cL(c) \rightarrow \{\cL(a),\cL(c)\}]$\;
}
\Return{$R,\cF,M$}
\caption{\scPCa$(R,\cF,\fcL=(\fc_1,\dots,\fc_n))$}
\label{14-alg-dagger_proc}
\end{algorithm}

\begin{lemma}
Let $\fcL'$ be an extended cherry list for two rooted binary phylogenetic $\cX$-trees $T_1$ and $T_2$. Moreover, let $\cF$ be a forest for $T_2$ calculated by $\fcL'$. Then, there also exists a cherry list $\fcL$ for $T_1$ and $T_2$ calculating $\cF$.
\label{14-lem-ext}
\end{lemma}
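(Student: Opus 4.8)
The plan is to show how an extended cherry action can be simulated by a sequence of ordinary cherry actions, and then to assemble these simulations into a full cherry list $\fcL$ that produces exactly the same forest $\cF$ as the extended list $\fcL'$. Since both \textsc{processCherries} and \scPCa~process their input action-by-action, maintaining a running pair $(R,\cF)$ together with the contraction memory $M$, it suffices to reason iteration by iteration: I would argue that whenever $\fcL'$ applies an action to a current state $(R_i,\cF_i)$ producing $(R_{i+1},\cF_{i+1})$, I can exhibit a finite block of ordinary cherry actions taking $(R_i,\cF_i)$ to the very same $(R_{i+1},\cF_{i+1})$, with $M$ handled consistently.

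First I would dispose of the easy case: if $\wedge_i$ is already an ordinary cherry action (reduction with $e_i=\emptyset$, or a single cut $e_a$/$e_c$), then it is copied verbatim into $\fcL$ and the two algorithms take identical steps. The only genuine work is the extended cherry action $\fc=(\{\cL(a),\cL(c)\},E_B)$, which by definition occurs only for a contradicting cherry with $a\sim_{\cF}c$ and deletes the entire pendant-edge set $E_B=\{e_B^{(1)},\dots,e_B^{(m)}\}$ (with $m=|V(\cP)|-3$) in one step, whereas ordinary cherry actions cut only one edge at a time. The key observation I would establish is that cutting the whole set $E_B$ at once yields the same forest as cutting its members one after another as a sequence of single-edge cherry-action cuts. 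Concretely, I would show that after cutting some prefix of $E_B$, the cherry $\{\cL(a),\cL(c)\}$ is still a contradicting cherry of the current $R$ and $\cF$ with $a\sim_{\cF}c$, and that each remaining edge of $E_B$ still lies on the path connecting $a$ and $c$, so cutting it is a legitimate (single-edge) cherry action of type~(3). This is exactly what is needed for each individual cut to qualify as a cherry action, so that the resulting block is a valid fragment of a cherry list.

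The technical point to handle carefully is that the definitions require the edges to be pairwise non-adjacent before a simultaneous cut (the notation $\cF-E'$ in the Preliminaries is only defined for non-adjacent edge sets), and that the intermediate suppression of degree-two nodes after each single cut does not disturb the identities of the edges still to be deleted. Because the $e_B^{(j)}$ all hang off distinct internal nodes of the path $\cP$, deleting one of them and suppressing the resulting degree-two node on $\cP$ leaves the other pendant edges intact and still pendant to $\cP$; I would verify this by tracking what happens to $\cP$ under a single cut, noting that the two path-neighbours of the cut's attachment point become adjacent after suppression but the other attachment points and their pendant edges are unaffected. Hence the net effect of the $m$ single cuts equals $\cF-E_B$, and likewise the restricted tree $R|_{\cL(\overline{\cF-E_B})}$ is reached, so the block reproduces $(R_{i+1},\cF_{i+1})$.

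I expect the main obstacle to be precisely this bookkeeping of the path $\cP$ and the suppression operations across the iterated single cuts: I must confirm that the cherry $\{\cL(a),\cL(c)\}$ persists and remains contradicting with $a\sim_{\cF}c$ after each intermediate cut (so that each single action is legitimately of type~(3) and in particular still offers a $e_B$-option), rather than collapsing prematurely, and that the memory list $M$ is untouched by all of these pure cut operations (cut actions do not modify $M$, so the final \textsc{cherryExpansion} loop behaves identically for $\fcL$ and $\fcL'$). Once these invariants are in place, concatenating the verbatim-copied ordinary actions with the expansion blocks for each extended action produces a list $\fcL$ which \textsc{processCherries}$(T_1,\{T_2\},\fcL)$ processes to the same final $\cF$, establishing that $\fcL$ is a cherry list for $T_1$ and $T_2$ calculating $\cF$.
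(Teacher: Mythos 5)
Your proposal is correct and follows essentially the same route as the paper: the paper's proof likewise replaces each extended cherry action $(\{\cL(a),\cL(c)\},E_B)$ with the sequence of single-edge cherry actions $(\{\cL(a),\cL(c)\},e_1),\dots,(\{\cL(a),\cL(c)\},e_k)$ and observes that the same edges are cut from $T_2$, so the same forest is computed. If anything, you are more careful than the paper, which does not explicitly verify that each intermediate single cut remains a legitimate type-(3) cherry action (i.e., that the cherry stays contradicting with $a\sim_{\cF}c$ and the remaining pendant edges stay pendant); your bookkeeping of the path $\cP$ and the suppression steps fills that in.
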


\begin{proof}
To proof this lemma we will show how to replace each extended cherry action of $\fcL'$ so that the resulting cherry list $\fcL$ still computes $\cF$. Let $\fc_i'=(\{\cL(a),\cL(c)\},E_B)$ be an extended cherry action with $E_B=\{e_1,e_2,\dots,e_k\}$. Then, we can replace $\fc_i'$ through the sequence of cherry actions $$(\{\cL(a),\cL(c)\},e_1),(\{\cL(a),\cL(c)\},e_2),\dots,(\{\cL(a),\cL(c)\},e_k).$$ Since by these cherry actions the same edges are cut from $T_2$ as by $\fc_i'$, the topology of $R_{i+1}'$ equals $R_{i+k+1}$, which directly implies that $\cF$ is still computed.
\end{proof}

\subsection{Correctness of \scAMa}
\label{sec-proof3}

In this section, we will discuss the correctness of our first modified algorithm \scAMa~by establishing the following theorem.

\begin{theorem}
Let $T_1$ and $T_2$ be two rooted binary phylogenetic $\cX$-trees and $k\in\mathbb{N}$. Calling $$\text{\scAMa$(T_1,T_2,T_1,\{T_2\},k,\emptyset)$}$$ returns all maximum acyclic agreement forests for $T_1$ and $T_2$ if and only if $k\ge h(T_1,T_2).$
\label{14-th-dagger}
\end{theorem}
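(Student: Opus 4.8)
The plan is to establish Theorem~\ref{14-th-dagger} by comparing the computational paths of \scAMa~against those of the original \textsc{allMAAFs}, exploiting the simplified ``path description'' machinery already set up via \scPCa~and Lemma~\ref{14-lem-ext}. Since \textsc{allMAAFs} is already known to be correct (by the cited result of \citet{Scornavacca2012}), the natural strategy is to show that \scAMa~computes exactly the same final set of forests, so that correctness transfers. Concretely, I would argue two inclusions: every maximum acyclic agreement forest produced by \textsc{allMAAFs} is also produced by \scAMa, and conversely every forest produced by \scAMa~is produced by \textsc{allMAAFs}. The only structural difference between the two algorithms is the treatment of a contradicting cherry $\{\cL(a),\cL(c)\}$ with $a\sim_{\cF}c$: where \textsc{allMAAFs} branches on cutting a single pendant edge $e_B\in E_B$, \scAMa~cuts the entire set $E_B$ at once (cf.~Alg.~\ref{14-alg-dagger_contra_cherry}, highlighted lines).

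First I would set up the correspondence at the level of cherry lists. By the definition of an extended cherry list, every computational path of \scAMa~is described by some extended cherry list $\fcL'$ for $T_1$ and $T_2$, and conversely. Lemma~\ref{14-lem-ext} already shows that any forest $\cF$ computed by an extended cherry list $\fcL'$ is also computed by an ordinary cherry list $\fcL$, namely by replacing each extended cherry action $(\{\cL(a),\cL(c)\},E_B)$ with the sequence of single-edge cherry actions cutting $e_1,\dots,e_k$ in turn. This immediately gives one direction: any agreement forest on a \scAMa-path is reachable by an \textsc{allMAAFs}-path, hence is a legitimate agreement forest, and the acyclicity test at the leaves is identical in both algorithms. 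The subtle point I must check here is the bookkeeping of the bound $k$: \scAMa~decrements by $|E_B|$ in the single recursive call $\cF-E_B$, which must match the cumulative decrement of the $|E_B|$ separate single-edge calls in \textsc{allMAAFs}, so that a forest surviving the bound in one algorithm survives it in the other. I would verify that cutting $E_B$ produces $|E_B|$ additional components exactly as the iterated single cuts do (using the counting remark that cutting a nonadjacent edge set $E'$ yields $|\cF|+|E'|$ trees), so the size and the threshold comparisons align.

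The harder direction, and what I expect to be the main obstacle, is showing completeness: that \scAMa~does not \emph{lose} any maximum acyclic agreement forest that \textsc{allMAAFs} would find. The worry is that \textsc{allMAAFs}, by cutting only a single $e_B\in E_B$ and then recursing, might reach a maximum acyclic agreement forest via some path in which the \emph{other} edges of $E_B$ are \emph{not} all cut, whereas \scAMa~forces all of them to be cut simultaneously. The key claim to prove is therefore that for the purpose of reaching a \emph{minimum-size} (maximum) acyclic agreement forest, once one edge of $E_B$ is cut in this $a\sim_{\cF}c$ situation, all remaining edges of $E_B$ must eventually be cut as well along any optimal path; equivalently, no optimal solution leaves a proper nonempty subset of $E_B$ uncut after having entered the $e_B$-branch. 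I would argue this by analyzing the structure of the path connecting $a$ and $c$: the subtrees hanging off this path that induce the pendant edges in $E_B$ are precisely the obstructions that, together with the cherry in $R$, prevent $a$ and $c$ from being co-located in an agreement forest unless they are separated out, and leaving any of them attached either reintroduces the same contradicting-cherry situation in a later iteration or forces a larger forest. Hence cutting all of $E_B$ at once does not forfeit optimality—it only collapses a set of branches that would each individually have to be cut anyway on any optimal continuation.

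Finally I would assemble the biconditional. For the ``if'' direction, assuming $k\ge h(T_1,T_2)$, the two inclusions above together with the correctness of \textsc{allMAAFs} show that \scAMa~returns precisely the set of maximum acyclic agreement forests. For the ``only if'' direction, I would observe that if $k<h(T_1,T_2)$ then by Theorem~\ref{theo-h} every acyclic agreement forest has size exceeding $k+1$, the initial bound is too small, and the pruning at line~1 (the $k<0$ base case) together with the threshold tests forces the returned set to miss at least one maximum acyclic agreement forest (indeed to be empty), so the equivalence is tight. The main work, and the step most in need of care, remains the completeness argument of the previous paragraph; everything else is routine once the extended-cherry-list correspondence and the $k$-bookkeeping are pinned down.
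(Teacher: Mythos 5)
Your overall frame (reduce to a path-by-path comparison with \textsc{allMAAFs}, use Lemma~\ref{14-lem-ext} for the direction ``every \scAMa{} path is simulated by an \textsc{allMAAFs} path'', and check that decrementing $k$ by $|E_B|$ matches $|E_B|$ single-edge decrements) agrees with the paper, and your handling of the ``only if'' half of the biconditional is fine. The gap is in your completeness step. Your key claim --- that once one edge of $E_B$ is cut for a contradicting cherry $\{\cL(a),\cL(c)\}$ with $a\sim_{\cF}c$, \emph{every} optimal continuation must eventually cut all of $E_B$ --- is not true, and the paper does not rely on it. An \textsc{allMAAFs} path may cut a single pendant edge $e_{B_1}$ and in a later iteration cut $e_a$ (or $e_c$), so that $\{\cL(a),\cL(c)\}$ is never contracted and the remaining pendant subtrees stay attached to the component containing $c$; the resulting forest can perfectly well be a maximum acyclic agreement forest. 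Your argument gives no way for \scAMa{} to reach such a forest, since \scAMa{} only ever offers ``cut $e_a$'', ``cut $e_c$'', or ``cut \emph{all} of $E_B$'' at that branch point.

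What the paper actually proves is a path-rewriting statement rather than a structural statement about optimal solutions. It calls a cherry action that cuts a pendant edge of a contradicting cherry a \emph{special} cherry action and classifies it as Type~A, B, or C according to whether it precedes (``refers to'') an eventual contraction of $\{\cL(a),\cL(c)\}$ and whether the cut subtree later collapses to a leaf forming a cherry of $R$. Special actions of Types A and C are not forced into the $E_B$-branch at all: they are deleted and re-inserted later in the list as \emph{ordinary} cherry actions (cutting the in-edge of a leaf of a genuine cherry of $R_k$), which is exactly how \scAMa{} reaches the forest in the scenario above. Only the Type~B actions, which genuinely enable a later contraction of $\{\cL(a),\cL(c)\}$, are rearranged to sit consecutively just before that contraction --- at which point they necessarily account for \emph{all} pendant edges still present --- and are then merged into a single extended cherry action. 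Without this case analysis and the deferral argument for Types A and C, your proof does not establish that \scAMa{} loses no maximum acyclic agreement forest, so the central step of the theorem remains unproved.
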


\begin{proof}

Let $T_1$ and $T_2$ be two rooted binary phylogenetic $\cX$-trees and let $\fcL$ be a cherry list for $T_1$ and $T_2$ mimicking a computational path of \textsc{allMAAFs} calculating a maximum acyclic agreement forest $\cF$ for $T_1$ and $T_2$. Then, in the following, we say a cherry action $\fc_i=(\{\cL(a),\cL(c)\},e_i)$ in $\fcL$ is a \emph{special cherry action}, if $\{\cL(a),\cL(c)\}$ is a contradicting cherry of $R_i$ and $\cF_i$ and if $e_i$ is contained in the set of pendant edges for $\{\cL(a),\cL(c)\}$. Note that, whereas such special cherry actions may occur in a computational path of \textsc{allMAAFs}, this is not the case for a computational path of \scAMa. In the following, however, we will show that the algorithm \textsc{allMAAFs} calculates a maximum acyclic agreement forest $\cF$ for $T_1$ and $T_2$ if and only if $\cF$ is calculated by \scAMa.\\

\begin{lemma}
Let $T_1$ and $T_2$ be two rooted binary phylogenetic $\cX$-trees and let $\cF$ be a maximum acyclic agreement forest for $T_1$ and $T_2$ of size $k$. Then, $\cF$ is calculated by calling $\textsc{allMAAFs}(T_1,T_2,T_1,T_2,k,\emptyset)$ if and only if $\cF$ is calculated by calling \scAMa$(T_1,T_2,T_1,\{T_2\},k,\emptyset).$
\label{14-lem-dagger}
\end{lemma}

\begin{proof}

'$\Longrightarrow$': From Lemma~\ref{14-lem-ext} we can directly deduce that if there exists a computational path of \scAMa~calculating $\cF$, then, there also exists a computational path in \textsc{allMAAFs} calculating $\cF$.\\

'$\Longleftarrow$': Let $\fc_j=(\{\cL(a),\cL(c)\},e_j=\emptyset)$ be a cherry action of $\fcL$ contracting both leaves $a$ and $c$. Then, we say a preceding special cherry action $\fc_i=(\{\cL(x),\cL(y)\},e_i)$ ($i<j$) refers to $\fc_k$ if the following condition is satisfied. Let $a_i$ and $c_i$ be the lowest common ancestor of $\cL(a)$ and $\cL(c)$ in $R_i$, respectively. Then, $e_i\neq\emptyset$ has to be a pendant edge lying on the path connecting both nodes $a_i$ and $c_i$. In such a case, we say the special cherry action is either of \emph{Type~A} or \emph{Type~B} (see definition below). Otherwise, if a special cherry action does not refer to another cherry action, we say this cherry action is of \emph{Type~C}.

Now, based on the edge $e_i=(v,w)$, being part of the special action $\fc_i$ referring to the cherry action $\fc_j$, we further distinguish whether $\fc_i$ is either of \emph{Type~A} or of \emph{Type~B}. Therefore, let $T(w)$ be the subtree rooted at $w$ and let $\cL(T(w))$ be the set of taxa being contained in $T(w)$. Now, we say $\fc_i$ is of \emph{Type~A} if there exists a forest $\cF_k$ and a tree $R_k$ with $i<k<j$ so that both of the following two conditions are satisfied.

\begin{itemize}
\item[(i)] Each taxon in $\cL(T(w))$ is part of a taxa set of an isolated node. Notice that, as a direct consequence, there exists a leaf $w'$ in $R_k$ with label $\cL(w)$.
\item[(ii)] The sibling $s'$ of $w'$ in $R_k$ is a leaf. Notice that, as a direct consequence, $\{\cL(s'),\cL(w')\}$ is a cherry of $R_k$.
\end{itemize}

Otherwise, if there does not exist such a forest $\cF_k$ and such a tree $R_k$ satisfying these two conditions, we say $\fc_i$ is of \emph{Type~B}. An illustration of these two types is given in Figure~\ref{14-fig-typeABC}

Notice that, due to the so chosen definition, each special cherry action in $\fcL$ has to be either of \emph{Type~A}, of \emph{Type~B}, or of \emph{Type~C}. Next, based on this observation, we will show in three steps how to turn $\fcL$ into another cherry list for $T_1$ and $T_2$ not containing any special cherry actions, but still calculating $\cF$, which can be briefly summarized as follows.

In a first step, we will show how to modify $\fcL$ by replacing each special cherry action of \emph{Type~A} and of \emph{Type~C} through a non-special cherry action so that the result is still a cherry list for $T_1$ and $T_2$ calculating $\cF$. Next, during a second and a third step, we will show how to replace each set of special cherry actions of \emph{Type~B} all referring to the same cherry action by a single extended cherry action so that the result is an extended cherry list for $T_1$ and $T_2$ still calculating $\cF$.\\

\begin{figure}
\centering
\includegraphics[scale=1.3]{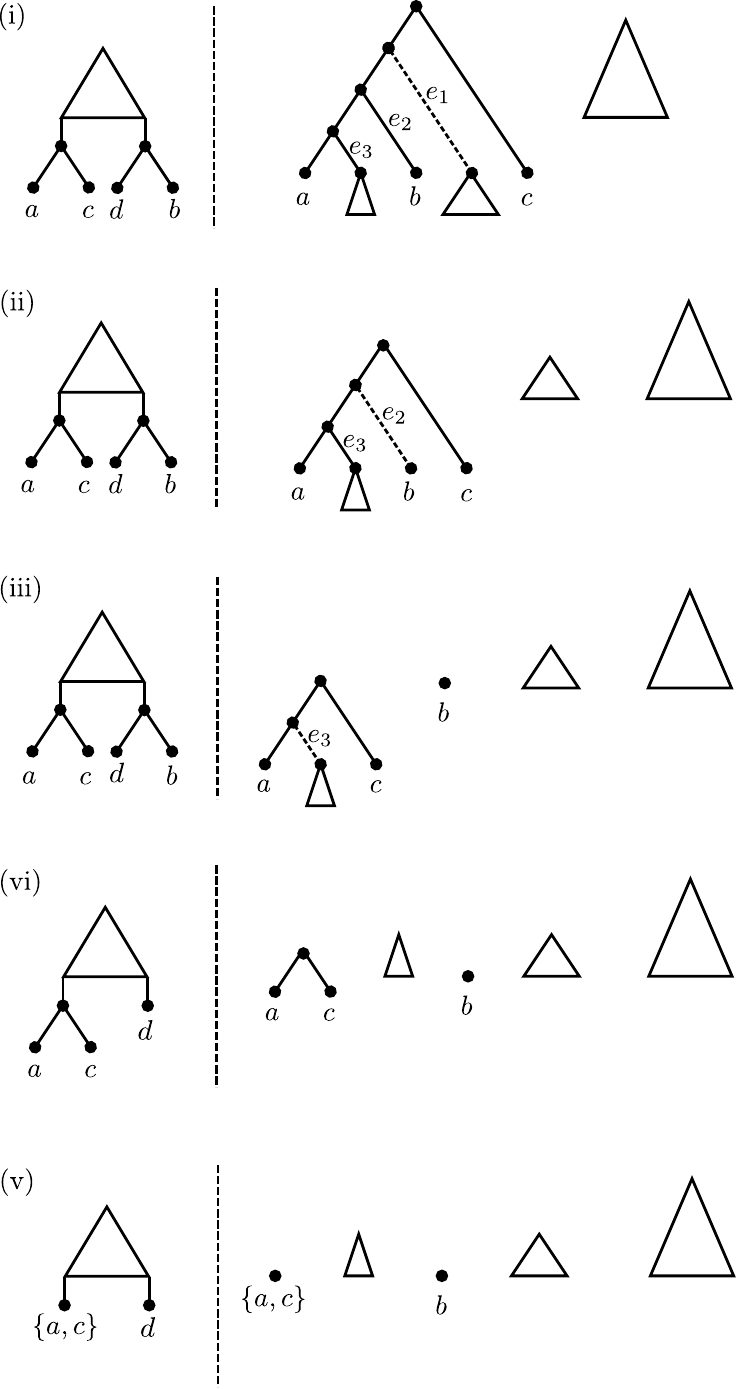}
\caption{An illustration of the cherry list $((\{a,c\},e_1),(\{a,c\}),e_2),(\{a,c\}),e_3),(\{a,c\}),\emptyset))$ in which $(\{a,c\},e_1)$ is a special cherry action of \emph{Type~B} and  $(\{a,c\},e_2)$ is a special cherry action of \emph{Type~A}. Note that the expanded cherry list $((\{b,d\},e_2),(\{a,c\},\{e_1,e_3\}),(\{a,c\}),\emptyset))$ applied to (i) yields the same scenario as depicted in (v) without making use of any special cherry actions.}
\label{14-fig-typeABC}
\end{figure} 

\textbf{Step 1.} Let $\fc_i=(\{\cL(x),\cL(y)\},e_i=(v,w))$ be a special cherry action of \emph{Type~A} or of \emph{Type~C}, and let $\fc_k$ with $i<k<j$ be the first cherry action of $\fcL$ in which in $\cF_k$ each taxon in $\cL(T(w))$ is contained in a taxa set of an isolated node so that in $R_k$ the sibling $s'$ of the leaf labeled by $\cL(w)$ is also leaf. Then, let $\fcL'$ be a cherry list that is obtained from $\fcL$ by first removing $\fc_i=(\{\cL(x),\cL(y)\},e_i)$ and then by inserting the cherry action $(\{\cL(w'),\cL(s')\},e_{w'})$ right after $\fc_k$, so that $\fcL'$ equals

$$(\wedge_1',\dots,\wedge_n')=(\wedge_1,\dots,\wedge_{i-1},\wedge_{i+1},\dots,\wedge_{k},\wedge_{k}',\dots,\wedge_{n}'),$$

\noindent where $\wedge_{k}'=(\{\cL(w'),\cL(s')\},e_{w'})$ with $e_{w'}$ being the in-edge of $w'$.

Now, for the following, remember that $R_l$ (or $R_l'$) refers to the input tree occurring during iteration $l$ while processing the cherry list $\fcL$ (or $\fcL'$). Moreover, we write $R_i=_{\wedge}R_j'$ if both trees contain the same set of cherries. Then, based on the position of the cherry action $\fc_l$ in $\fcL$, we can make the following five observations (cf.~Fig.~\ref{14-fig-dagger-1}).

\begin{figure}[tb]
\centering
\includegraphics[scale=0.7]{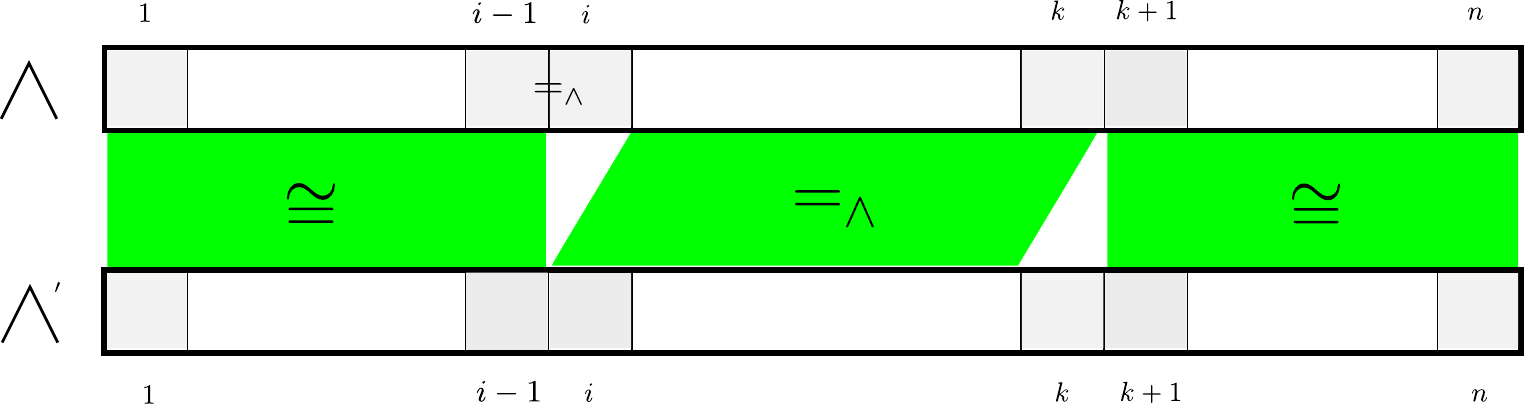}
\caption{An illustration of the scenario as described in Step 1. The figure shows a comparison between two sequences of $R_i$-trees; one corresponding to the original cherry list and one corresponding to the modified cherry list, in which a special cherry action of \emph{Type A} is shifted from position $i$ to $k+1$.}
\label{14-fig-dagger-1}
\end{figure} 

\begin{itemize}
\item If $l<i$, then $R_{l}'$ equals $R_l$: This is the case because $R_1$ equals $R_1'$ and through $\fc_l$ and $\fc_l'$ the same tree operation is performed on $R_l$ and $R_l'$, respectively.
\item If $l=i$, then $R_{l-1}'=_{\wedge}R_l$: This is the case because $R_{i-1}$ equals $R_{i-1}'$, in $R_l$ the node $w$ cannot be part of a cherry (due to the definition of a special cherry action of \emph{Type~A}) and, thus, through $\fc_l$ (=$\fc_i$) the set of cherries in $R_{i-1}$ remains unchanged.
\item If $i<l<k+1$, then $R_{l-1}'=_{\wedge}R_l$: This is the case because $R_i=_{\wedge}R_{i-1}'$ and again in $R_l$ node $w$ cannot be part of a cherry. Thus, the two cherry actions $\fc_l$ and $\fc_{l-1}'$ produce the same set of cherries in $R_l$ and $R_{l-1}'$.
\item If $l=k+1$, then $R_{k+1}'$ equals $R_{k+1}$: This is the case because $R_k=_{\wedge}R_{k-1}'$ and through $\fc_k'$ first node $w$ is cut and then removed from $R_{k-1}'$. Consequently, through the cherry actions $\fc_q$ and $\fc_q'$, with $1\le q<k+1$, the the same tree operations are applied to $R_1$ and $R_1'$ and, thus, $R_{k+1}$ equals $R_{k+1}'$.
\item If $l>k+1$, then $R_{l}'$ equals $R_l$: This is the case because $R_{k+1}$ equals $R_{k+1}'$ and through $\fc_l$ and $\fc_l'$ the same tree operation is performed on $R_l$ and $R_l'$, respectively.
\end{itemize}

Due to these observations and since through $\fcL'$ still the same edges are cut from $T_2$ as through the original cherry list $\fcL$, $\fcL'$ is a cherry list for $T_1$ and $T_2$ still calculating $\cF$. Now, by consecutively replacing all special cherry actions of \emph{Type~A} and \emph{Type~C} we can turn $\fcL$ into the cherry list $\fcL^{(1)}$ for $T_1$ and $T_2$ only containing special cherry actions of \emph{Type~B} and still computing $\cF$. Next, we will show how to remove each of those remaining special cherry actions.\\

\textbf{Step 2.} Let $\fcL^*\subset\,\fcL$ with $|\fcL^*|=k$ be a set of special cherry actions of \emph{Type~B} all referring to a cherry action $\fc_j=(\{\cL(a),\cL(c)\},\emptyset)$, and let $i=\min_{i'}\{\fc_{i'}:\fc_{i'}\in\,\fcL^*\}$  with $\fc_i=(\{\cL(x),\cL(z)\},e_{i})$. Moreover, let $\fcL'$ be the cherry list that is obtained from $\fcL^{(1)}$ as follows. First the cherry of each cherry action in $\fcL^*$ is replaced by $\{\cL(a),\cL(c)\}$ and then all those cherry actions in $\fcL^*$ are rearranged such that they are placed in sequential order directly right before position $j$. 

This means, in particular, that $\fcL'$ contains a sequence of special cherry actions

\begin{flalign*}
&~\vdots\\
\wedge_{j-k} &= (\{\cL(a),\cL(c)\},e_{{j-k}}=(v_{j-k},w_{j-k})),\\
\wedge_{j-k+1} &= (\{\cL(a),\cL(c)\},e_{j-k+1}=(v_{j-k+1},w_{j-k+1})),\\
&~\vdots\\
\wedge_{j-1} &= (\{\cL(a),\cL(c)\},e_{j-1}=(v_{j-1},w_{j-1})),\\
\wedge_{j} &= (\{\cL(a),\cL(c)\},\emptyset),\\
&~\vdots
\end{flalign*}

\noindent in which each edge of $\fcL^*$ is contained in the set of pendant edges for the cherry $\{\cL(a),\cL(c)\}$. Now, just for convenience, in the following we assume, without loss of generality, that all those cherry actions in $\fcL^*$ occur in sequential order beginning at position $i$. Moreover, for the following, let $\cW$ be the set of target nodes of each edge in $\fcL^*$ and let $R_l$ (or $R_l'$) be the input tree of iteration $l$ while processing the cherry list $\fcL^{(1)}$ (or $\fcL'$). Additionally, again we write $R_i=_{\wedge}R_j'$ if both trees contain the same set of cherries. Then, based on the position of a cherry action $\fc_l$ in $\fcL$, we can make the following five observations (cf.~Fig.~\ref{14-fig-dagger-2}).

\begin{figure}[tb]
\centering
\includegraphics[scale=0.7]{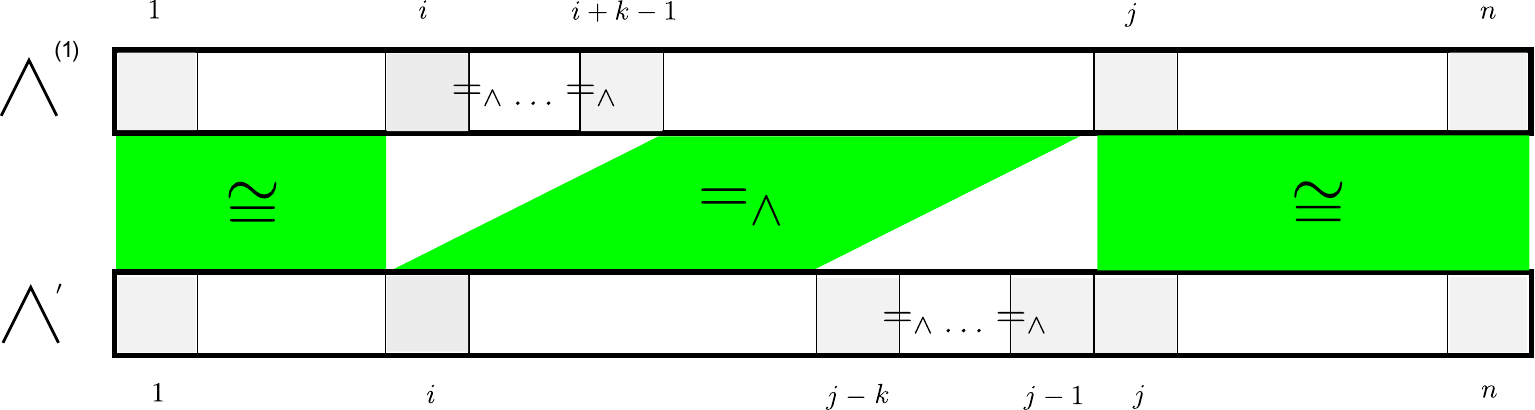}
\caption{An illustration of the scenario as described in Step 2. The figure shows a comparison of the two sequences of $R_i$-trees; one corresponding to the original cherry list and one corresponding to the modified cherry list, in which a sequence of $k$ special cherry actions of \emph{Type B}, beginning at position $i$, is shifted to position $j-k$.}
\label{14-fig-dagger-2}
\end{figure} 

\begin{itemize}
\item If $l<i$, then $R_l'$ equals $R_l$: This is the case because $R_1$ equals $R_1'$ and through $\fc_l$ and $\fc_l'$ the same tree operation is performed on $R_l$ and $R_l'$, respectively.
\item If $i-1<l<i+k$, then $R_l=_{\wedge}R'_{i-1}$: This is the case because $R_{i-1}$ equals $R_{i-1}'$ and, since in $R_l$ each $w_i$ in $\cW$ cannot be part of a cherry, through $\fc_l$ the set of cherries in $R_l$ remains unchanged.
\item If $i+k-1<l<j$, then $R_{l-k}'=_{\wedge}R_{l}$: This is the case because $R_{i+k-1}=_{\wedge}R_{i-1}'$ and again in $R_l$ each $w_i$ in $\cW$ cannot be part of a cherry. Thus, the two cherry actions $\fc_l$ and $\fc_{l-k}'$ produce the same set of cherries in $R_l$ and $R_{l-k}'$.
\item If $l=j$, then $R_{l}'$ equals $R_{l}$: This is the case because $R_{j-1}=_{\wedge}R_{j-k-1}'$ and through each cherry action in $\fcL^*$ each subtree $T(w_i)$ is cut (and removed from $R_{j-k-1}'$, if $w_i$ is a leaf). Consequently, through the cherry actions $\fc_q$ and $\fc_q'$, with $1\le q<j$, the same tree operations are applied to $R_1$ and $R_1'$ and, thus, $R_{j}$ equals $R_{j}'$.
\item If $l>j$, then $R_{l}'$ equals $R_l$: This is the case because $R_{j}$ equals $R_{j}'$ and through $\fc_l$ and $\fc_l'$ the same tree operation is performed on $R_l$ and $R_l'$, respectively.
\end{itemize}

Now, by consecutively rearranging all special cherry actions of \emph{Type~B} as described above, we can turn $\fcL$ into the cherry list $\fcL^{(2)}$ for $T_1$ and $T_2$ in which all special cherry actions referring to the same cherry action are located next to each other. Moreover, as a direct consequence of these observations and since through $\fcL^{(2)}$ still the same edges are cut from $T_2$ as through $\fcL^{(1)}$, $\fcL^{(2)}$ is still a cherry list for $T_1$ and $T_2$ calculating~$\cF$.\\

\textbf{Step 3.} Let $\fcL^{(2)}$ be the list that is obtained from applying Step 1 and Step 2 as described above. Then, we can further modify the cherry list $\fcL^{(2)}$ to $\fcL^{(3)}$ by replacing each sequence 

$$(\wedge_i=(\{\cL(a),\cL(c)\},e_i),\dots,\wedge_{i+k-1}=(\{\cL(a),\cL(c)\},e_{i+k-1}))$$ 

\noindent of special cherry actions (i.e., cherry actions of \emph{Type~B}) through a single extended cherry action $\fc_B=(\{\cL(a),\cL(c)\},E_B)$, where $E_B$ denotes the pendant edge set for $\{\cL(a),\cL(c)\}$. As a consequence, since $\fc_B$ just summarizes all tree operations conducted by the replaced sequence of special cherry actions, by $\fcL^{(3)}$ the same edges are cut from $T_2$ as it is the case for the cherry list $\fcL^{(2)}$. As a direct consequence, the maximum acyclic agreement forest $\cF$ is still calculated by the extended cherry list $\fcL^{(3)}$.\\

In summary, as described by those three steps, we can consecutively replace all special cherry actions of $\fcL$ so that the resulting list $\fcL^{(3)}$ satisfies all of the following three conditions.

\begin{itemize}
\item $\fcL^{(3)}$ is an extended cherry list for $T_1$ and $T_2$.
\item $\fcL^{(3)}$ does not contain any special cherry actions.
\item $\fcL^{(3)}$ calculates the maximum acyclic agreement forest $\cF$.
\end{itemize}

This directly implies that for each computational path in \textsc{allMAAFs} calculating a maximum acyclic agreement forest $\cF$ for both input trees there also exists a computational path in \scAMa~calculating $\cF$.
\end{proof}

Now, from Lemma~\ref{14-lem-dagger} we can directly deduce the correctness of Theorem~\ref{14-th-dagger}.
\end{proof}

\subsection{Correctness of \scAMc}
\label{sec-proof2}

The correctness of our third modified algorithm \scAMc~principally directly follows from the correctness of the refinement step, which can be found in the work of Whidden \emph{et al.}~\citet{Whidden2011}. We still have to show, however, that omitting both additional cutting steps when processing a common cherry still computes all of those agreement forests from which all maximum acyclic agreement forest can be obtained by cutting some of its edges.

\begin{theorem}
Let $T_1$ and $T_2$ be two rooted binary phylogenetic $\cX$-trees and $k\in\mathbb{N}$. Calling $$\text{\scAMc$(T_1,T_2,T_1,\{T_2\},k,\emptyset)$}$$ returns all maximum acyclic agreement forests for $T_1$ and $T_2$ if and only if $k\ge h(T_1,T_2).$ 
\label{14-lem-three}
\end{theorem}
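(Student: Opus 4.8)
The plan is to follow the two-directional template already used for Theorem~\ref{14-th-dagger} and to push all of the acyclicity bookkeeping onto the correctness of \textsc{RefineForest}, which I take as given from Whidden~\emph{et al.}~\citet{Whidden2011}. Write $h=h(T_1,T_2)$. For the \emph{necessity} of $k\ge h$ I would argue as for the original algorithm: a maximum acyclic agreement forest has reticulation number $h$, and the budget $k$ bounds the reticulation number of any forest that \scAMc~can complete and retain — through the guard ``$k<0$'' in the recursion and through the size filter in the base case of Algorithm~\ref{14-alg-mod2-main}. Hence $k<h$ lets at least one (in fact every) maximum acyclic agreement forest escape the filter, so $k\ge h$ is necessary. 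The remainder establishes, for $k\ge h$, the set equality between the output of \scAMc~and the family of all maximum acyclic agreement forests, split into soundness and completeness.

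Soundness is the short half. Since \scAMc~differs from \scAMa~only in (i) \emph{contracting} a common cherry instead of additionally cutting one of its in-edges $e_a,e_c$ and (ii) replacing the final acyclicity test by a call to \textsc{RefineForest} followed by a size filter, every forest $\cF''$ collected in the base case is an acyclic agreement forest by the correctness of \textsc{RefineForest}. The running minimisation on $k$ retains only those $\cF''$ of globally smallest size encountered over all computational paths, and by definition an acyclic agreement forest of smallest size is a maximum acyclic agreement forest; so nothing else is returned.

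Completeness is the heart of the matter and is exactly the claim announced before the theorem: the skeleton of \scAMc~(the algorithm stripped of the refinement step) must compute, for every maximum acyclic agreement forest $\cF$, an agreement forest $\cF'$ from which $\cF$ is recovered by cutting edges. I would produce $\cF'$ from $\cF$ as follows. By Theorem~\ref{14-th-dagger} the forest $\cF$ is computed by \scAMa~along some computational path $P$, readable through the \scPCa~machinery as an extended cherry list for $T_1$ and $T_2$. The only actions of $P$ unavailable to \scAMc~are the \emph{common-cherry cuts}, i.e.\ the deletions of an in-edge $e_a$ or $e_c$ of a common cherry $\{\cL(a),\cL(c)\}$; collect their edges into a set $E_\circ$. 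Replacing each such cut by the corresponding contraction turns $P$ into a legal skeleton path $P'$ of \scAMc~terminating at the base case with a complete agreement forest $\cF'$ satisfying $\cF=\cF'-E_\circ$, so that $\cF$ is an acyclic refinement of $\cF'$; moreover it is a \emph{minimum-size} one, since a smaller acyclic refinement of $\cF'$ would be an acyclic agreement forest smaller than $\cF$, contradicting the maximality of $\cF$. As each edge of $E_\circ$ was deleted along $P$ only in order to destroy a directed cycle of $AG^*(T_1,T_2,\cdot)$, the correctness of the expanded-cycle-graph refinement gives $\cF\in\textsc{RefineForest}(\cF')$, and since no acyclic agreement forest is smaller than $\cF$ the global minimisation in the base case keeps $\cF$; thus $\cF$ is returned.

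I expect the main obstacle to be the two verifications hidden in the last paragraph. First, the path surgery $P\rightsquigarrow P'$ must be shown well defined: replacing a common-cherry cut by a contraction has to leave every later (extended) cherry action of $P$ legal and has to land on a forest $\cF'$ with $\cF=\cF'-E_\circ$; I would handle this by a shifting argument analogous to Steps~1 and~2 in the proof of Lemma~\ref{14-lem-dagger}, using that a contraction is undone by \textsc{cherryExpansion} and hence leaves the downstream cherry structure unchanged. Second, the identification of the omitted cuts in $E_\circ$ with the cuts performed by \textsc{RefineForest} must be made precise — namely that $\cF$ really is a minimum-size acyclic refinement of $\cF'$ and that ranging the construction over all \scAMa-paths supplied by Theorem~\ref{14-th-dagger} loses no maximum acyclic agreement forest — and this is precisely where the characterisation of acyclic refinements of Whidden~\emph{et al.}~\citet{Whidden2011} is indispensable. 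Together with the soundness paragraph, these two verifications yield the desired set equality and hence the theorem.
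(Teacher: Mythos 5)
Your proposal is correct and follows essentially the same route as the paper: the paper also reduces completeness to showing that replacing each common-cherry cut on an \scAMa-path by a contraction yields a legal path ending in an agreement forest $\hat\cF$ of smaller size from which the target maximum acyclic agreement forest is recovered by cutting the re-attachment edges (this is its Lemma~\ref{14-lem-one} and Fig.~\ref{14-fig-2forests}), and then delegates the identification of those edges to the correctness of \textsc{RefineForest} from Whidden~\emph{et al.} Your treatment is if anything slightly more explicit about the multi-cut case and the necessity/soundness directions, which the paper leaves implicit.
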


\begin{proof}
To proof Theorem~\ref{14-lem-three} we first have to establish a further lemma. Here, we argument that by contracting common cherries instead of cutting one of its edges, results in an agreement forests containing less components.

\begin{lemma}
Let $\fcL=(\fc_1,\fc_2,\dots,\fc_n)$ be a cherry list for two rooted binary phylogenetic $\cX$-trees $T_1$ and $T_2$ with $\fc_i=(\{\cL(a),\cL(c)\},e_c)\in\fcL$ so that $\{\cL(a),\cL(c)\}$ is a common cherry of $R_i$ and $\cF_i$ and $e_c$ the in-edge of leaf $c$. Moreover, let $\cF$ be the agreement forest for $T_1$ and $T_2$ that is calculated by calling {\sc~processCherries}$(R_1,\cF_1,\fcL)$ such that $|\cF|=k$. Then, there exists an agreement forest $\hat\cF$ with $|\hat\cF|=k-1$ that can be computed by calling {\sc~processCherries}$(R_i,\cF_i,(\fc_i',\dots,\fc_n'))$ in which $\fc_i'=(\{\cL(a),\cL(c)\},\emptyset)$.
\label{14-lem-one}
\end{lemma}

\begin{proof} 
Let $F_a$ and $F_c$ be the two components in $\cF$ that have been derived from expanding both components containing $a$ and $c$, respectively, and let $\cX_a$ be the taxa set of the subtree rooted at $a$ (i.e., $F_a(a)$). Then, by attaching $F_c$ back to $F_a$ we can reduce the size of the agreement forest $\cF$ by one. Here, depending on whether $\scLCA_{F_a}(\cX_a)$ is the root of $F_a$ or not, this re-attachment step can be done in two different ways. 

\begin{itemize}
\item If $\scLCA_{F_a}(\cX_a)$ is not the root of $F_a$ and, thus, has an in-going edge $e=(u,w)$, first $e$ is split into two adjacent edges $(u,v)$ and $(v,w)$ and then $F_c$ is re-attached to $v$ by inserting a new edge $(v,r_c)$, where $r_c$ denotes the root of $F_c$. 
\item Otherwise, if $\scLCA_{F_a}(\cX_a)$ is the root of $F_a$, first a new node $v$ is created and then $v$ is connected to the two roots of $F_a$ and $F_c$. 
\end{itemize}

In the following, we will denote the component that is obtained from attaching $F_c$ back to $F_a$ by $F_r$ and the resulting set of components by $\hat\cF$. In Figure~\ref{14-fig-2forests}, we give an illustration of the two forests $\cF$ and $\hat\cF$.

\begin{figure}
\centering
\includegraphics[scale=1]{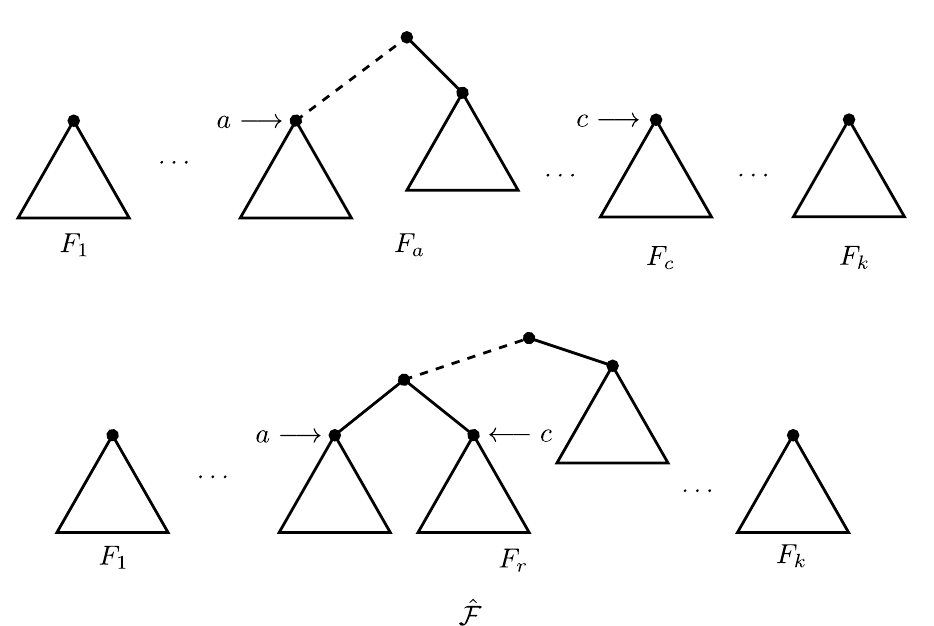}
\caption[Two forests as defined in Lemma~\ref{14-lem-one}]{The two forests $\cF$ and $\hat\cF$ as defined in Lemma~\ref{14-lem-one}. Note that, since $F_c$ is attached to $F_a$, the size of $\hat\cF$ is $|\cF|-1$.} 
\label{14-fig-2forests}
\end{figure} 

Notice that we can calculate the agreement forest $\hat\cF$ by calling the algorithm{\sc~processCherries} with a specific cherry list $\fcL'$ that can be derived from the original cherry list $\fcL=(\fc_1,\fc_2,\dots,\fc_i=(\{\cL(a),\cL(c)\},e_c),\dots,\fc_n)$ as follows.

\begin{enumerate}
\item[(1)] Replace $\fc_i$ by $(\{\cL(a),\cL(c)\},\emptyset)$.
\item[(2)] In each subsequent cherry action $\fc_{i+1},\dots,\fc_n$, replace $\cL(a)$ through $\cL(a)\cup\cL(c)$.
\end{enumerate}

Let $R_{i+1}$ and $\cF_{i+1}$ be computed by applying $(\{\cL(a),\cL(c)\},e_c)$ to $R_i$ and $\cF_i$ and let $R'_{i+1}$ and $\cF'_{i+1}$ be computed by applying $(\{\cL(a),\cL(c)\},\emptyset)$ to $R_i$ and $\cF_i$. Since $\{\cL(a),\cL(c)\}$ is a common cherry of $R_i$ and $\cF_i$, the only difference between the two topologies, disregarding node labels, is that $\cF_{i+1}$ contains an additional component consisting of an isolated node labeled by $\cL(c)$. Notice, however, that, since this component is fully contracted, this component cannot have an impact on the subsequent cherry actions $\fc_{i+1},\dots,\fc_n$. Consequently, the cherry list $\fcL'$ calculates $\hat\cF$. 

Up to now, we have shown that there exists a set of components $\hat\cF$ of size $|\cF|-1$, which can be computed by calling the algorithm {\sc processCherries} with a slightly modified cherry list $\fcL'$. Now, to establish Lemma~\ref{14-lem-one}, we still have to show that $\hat\cF$ is an agreement forest for both input trees $T_1$ and $T_2$. First, note that each expanded forest to which \textsc{allMAAFs} applies the acyclic check (cf.~Alg.~\ref{14-alg-main},~line~5) has to satisfy each condition of an agreement forest \cite[Lemma 3]{Scornavacca2012}. As $\fcL'$ is imitating a computational path corresponding to the original algorithm \textsc{allMAAFs} calculating $\hat F$, this directly implies that $\hat F$ is an agreement forest for $T_1$ and $T_2$. 
\end{proof}

Now, given two rooted binary phylogenetic $\cX$-trees $T_1$ and $T_2$, assume that our first modified algorithm \scAMa~contains a computational path calculating a maximum acyclic agreement forest $\cF$ of size $k$ by cutting instead of contracting a common cherry. More precisely, let $\fcL$ with $\fc_i=(\{\cL(a),\cL(c)\}\in\fcL$ be the cherry list mimicking this computational path and, without loss of generality, let $\fc_i$ be the only cherry action cutting instead of contracting a common cherry. Then, as already discussed in the proof of Lemma~\ref{14-lem-one}, there additionally exists a cherry list $\fcL'$ calculating a specific agreement forest $\hat\cF$ of size $k-1$ containing a component $\hat F$. More specifically, let $F_a$ and $F_c$ be the expanded component in $\cF$ which is derived from the contracted node $a$ and $c$, respectively. Then, $\hat\cF$ is obtained from $\cF$ by re-attaching $F_c$ back to $F_a$ (cf.~Fig.~\ref{14-fig-2forests}). Thus, as a direct consequence, by cutting the in-edge corresponding to the root of $\hat F(\cL(F_c))$ the maximum acyclic agreement forest $\cF$ arises. 

This implies that, if cutting instead of contracting a common cherry yields a maximum acyclic agreement forest $\cF$, the algorithm \scAMc~guarantees the computation of an agreement forest $\hat\cF$ from which $\cF$ can be obtained by cutting some of its edges. Note that the refinement step is always able to identify the minimal number of such edges and, thus, as the only difference between the first modified algorithm and the algorithm \scAMc~consists in the way a common cherry is processed, Theorem~\ref{14-lem-three} is established.
\end{proof}

\section{Simulation Study}
\label{14-sec-simulation}

Our simulation study has been conducted on the same synthetic dataset as the one used for the simulation study reported in the work of Albrecht \emph{et al.}~\citet{Albrecht2011}. It consists of binary phylogenetic tree pairs that have been generated in respect to one combination of the following three parameters: the \emph{number of taxa} $n=\{20,50,100.200\}$, the \emph{executed number of rSPR-moves} $k=\{5,10,\dots,50\}$, and the \emph{tangling degree} $d=\{3,5,10,15,20\}$. For each combination of those parameters $10$ tree pairs have been generated, resulting in $2000$ tree pairs in total. 

More precisely, a tree pair $(T_1,T_2)$ is computed as follows. In a first step, the first tree $T_1$ with $n$ leaves is computed which is done initially by randomly selecting two nodes $u$ and $v$ of a specific set $\cV$ consisting of $n$ nodes of both in- and out-degree $0$. Then, those two selected nodes $u$ and $v$ are connected to a new node $w$ and, finally, $\cV$ is updated by replacing $u$ and $v$ by its parent $w$. This process is repeated until $\cV$ consists only of one node corresponding to the root of $T_1$. In a subsequent step, $T_2$ is computed by applying $k$ rSPR-moves to $T_1$ each respecting tangling degree $d$.

When performing a sequence of rSPR-moves, one can undo or redo some of those moves and, thus, $k$ is only an upper bound of the real underlying rSPR-distance corresponding to both trees of a tree pair. The tangling degree, as already described in the work of Albrecht \emph{et al.}~\citet{Albrecht2011}, is an \emph{ad hoc} concept controlling the number of minimum common clusters during the construction of a tree pair. Since, however, all the four implementations that are tested on the synthetic dataset perform a cluster reduction and, thus, equally benefit from the number of minimum common clusters, this number is irrelevant for our simulation study and, consequently, we do not give any further details about this parameter. Instead, we refer the interested reader to the work of Albrecht \emph{et al.}~\citet{Albrecht2011}.

In this section, the practical runtimes produced by the respective implementations of our two modifications are compared with the practical runtimes of an implementation of the original algorithm. More precisely, by applying our synthetic dataset the practical runtime was measured for an implementation of the original algorithm \textsc{allMAAFs}, the algorithm \scAMa, and the algorithm \scAMc. Each of those algorithms has been integrated as a plug-in into the freely available Java software Hybroscale\footnote{\url{www.bio.ifi.lmu.de/softwareservices/hybroscale}}. The simulation study has been conducted on a grid computer providing $16$ cores and $40$\,GB RAM. In order to receive a reasonable set of completed data sets within an appropriate time period, we decided to compute only the hybridization number and omitted the computation of all maximum acyclic agreement forests. Moreover, we set the maximum runtime of each tree pair to $20$~minutes, which means that each tree pair whose computation of the hybridization number could not be finished within $20$~minutes was aborted. Notice that depending on the runtime analysis, which is shown in the following, those aborted tree pairs were either not taken into account or counted as being finished after $20$~minutes.

The problem arising when computing all, instead of just one, maximum acyclic agreement forests for certain tree sets is that, typically, there exists a large number of those agreement forests all being distributed in a vast search space. Consequently, even if those tree sets are of low computational complexity, one has to investigate far more than $20$~minutes, which means that one could not conduct the simulation study within an appropriate time period. Otherwise, if we would choose a relatively small maximum running time, such as $20$~minutes, we could only process those tree sets of low computational complexity, which are not able to indicate the speedup obtained from applying our three modifications.

\begin{figure}
\centering
\includegraphics[scale=0.4]{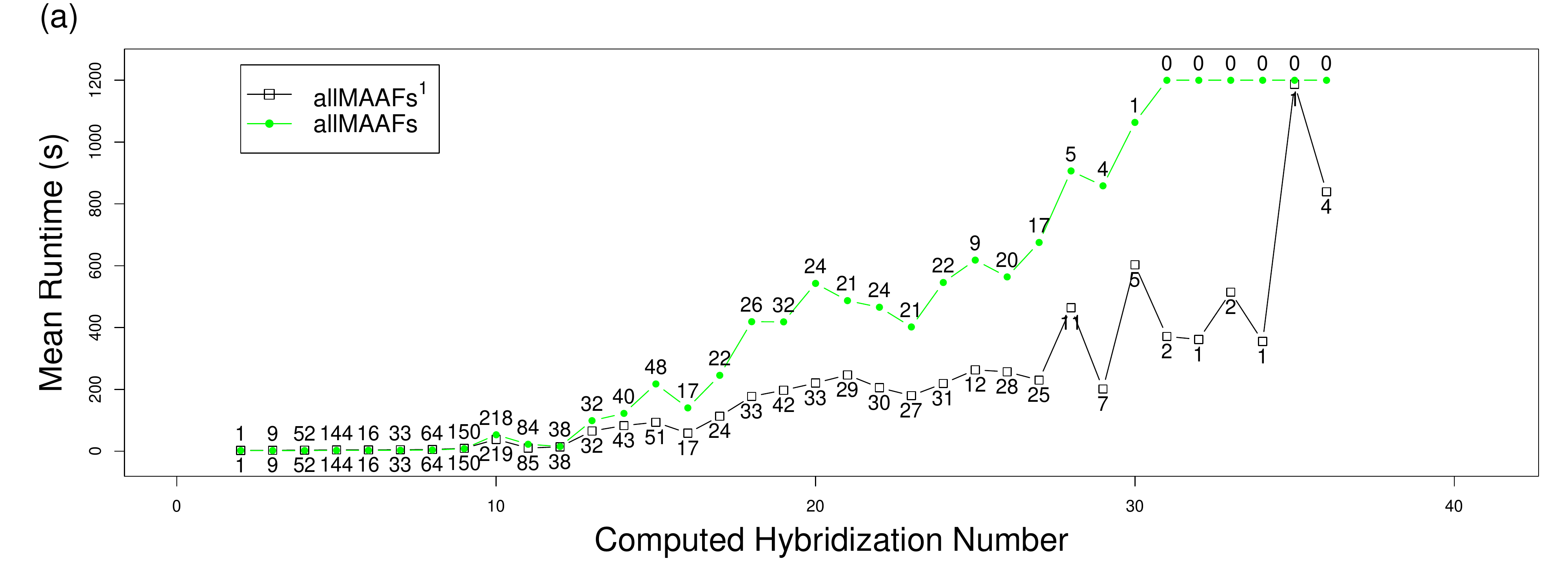}
\includegraphics[scale=0.4]{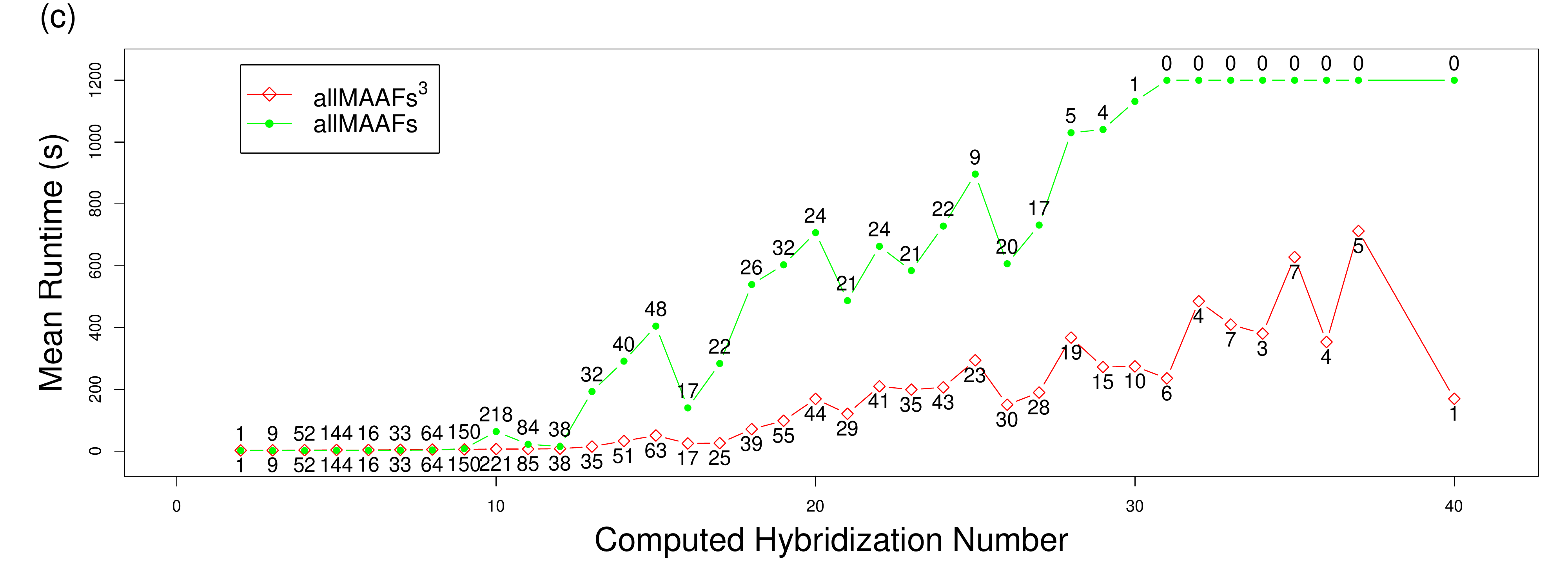}
\includegraphics[scale=0.4]{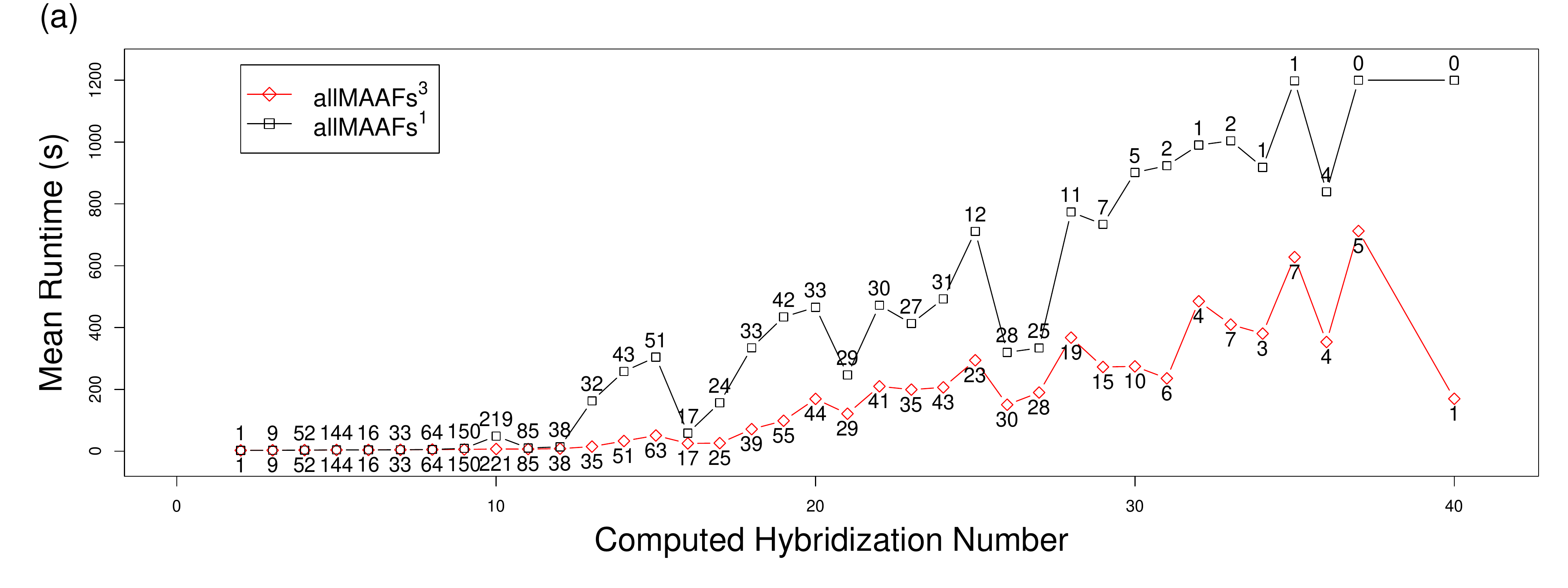}
\caption{Comparisons between the mean average runtimes in terms of the hybridization number of (a) \textsc{allMAAFs} and \scAMa,(b) \textsc{allMAAFs} and \scAMc, as well as (c) \scAMa~and \scAMc. For each hybridization number the corresponding number of tree pairs is given that could be computed within the time limit of $20$~minutes by the two respective algorithms and, thus, contributed to the mean average runtime denoted by the y-axis.} 
\label{14-fig-Hsim1}
\end{figure}

In Figure~\ref{14-fig-Hsim1}, the mean average runtimes in terms of the computed hybridization numbers are shown. More precisely, this plot was generated by first aggregating all tree pairs corresponding to the same hybridization number and then by computing the mean average runtime of each of those aggregated subsets. Therefore, each aborted tree pair whose computation would have taken longer than the given time limit of $20$~minutes, was treated as follows. If the tree pair could not be computed by both considered algorithms, this tree pair was not taken into account. However, if at least one algorithm was able to compute the hybridization number for a tree pair, the runtime of the other algorithm was set to $20$~minutes if this algorithm was aborted in this case. Regarding the Figure~\ref{14-fig-Hsim1}, the number assigned to each measurement denotes the number of tree pairs whose hybridization number could be computed within the time limit by the corresponding two algorithm. 

Figure~\ref{14-fig-Hsim1} indicates that each of our two modified algorithms outperforms the original algorithm. Moreover, this figure indicates that our third modified algorithm \scAMc~is significantly more efficient than our first modified algorithm \scAMa.

\begin{figure}
\centering
\includegraphics[scale=0.4]{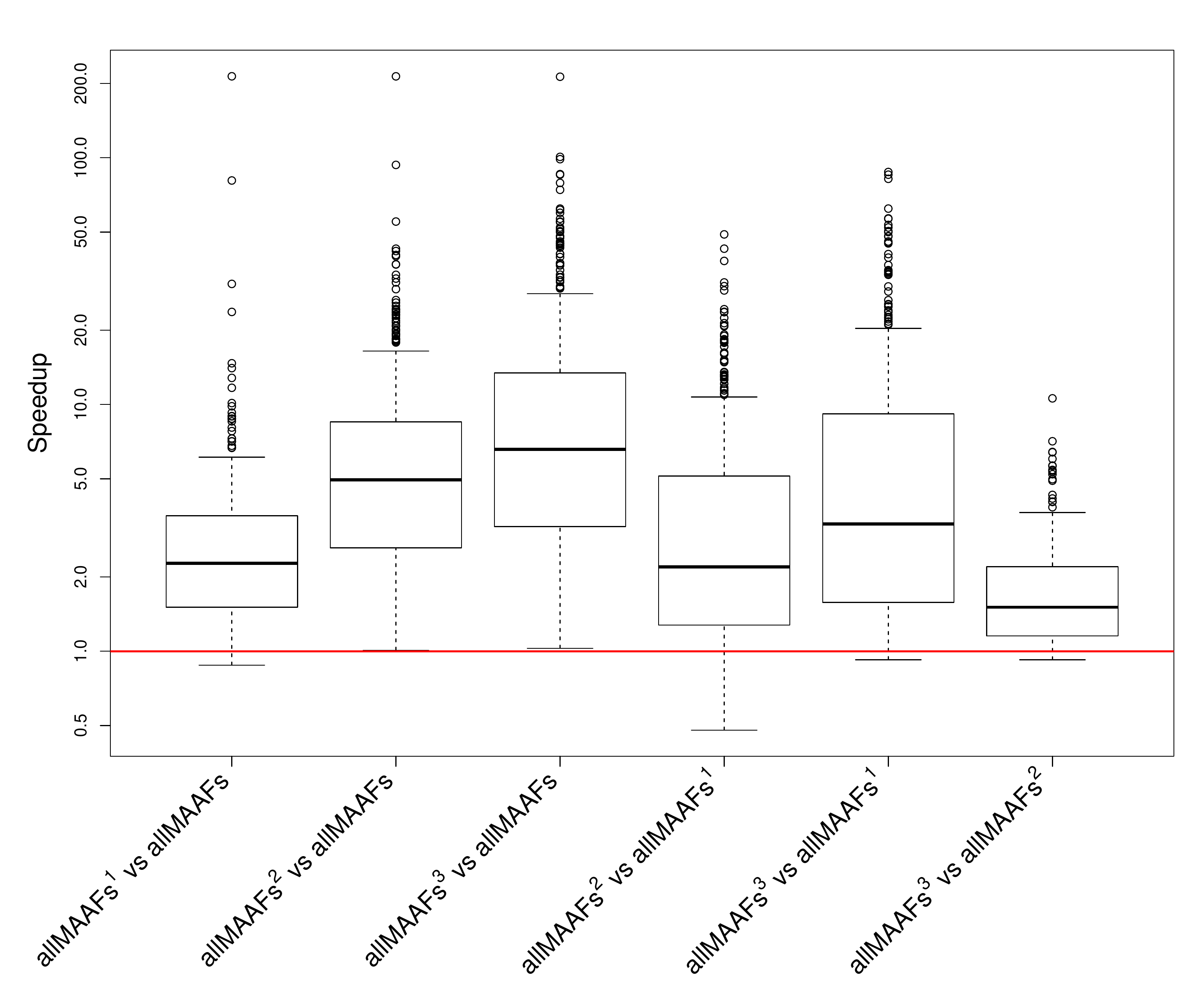}
\caption{The distributions of the speedups, obtained from comparing algorithm $A_1$ versus algorithm $A_2$, as denoted at the x-axis, via boxplots. A tree pair $d$ from our synthetic dataset was only considered if at least one of the two algorithms could process $d$ within the time limit of $20$ minutes and if at least one algorithm took longer than $50$ seconds for its computation. Notice that a base-10 log scale is used for the y-axis.} 
\label{14-fig-speedUp}
\end{figure}

Figure~\ref{14-fig-speedUp} shows the distributions of the speedups obtained from our two modified algorithms via boxplots. Each of those boxplots was generated by first selecting a relevant set $\cD$ of tree pairs from our synthetic dataset and then by computing the speedup of each of those tree pairs by taking the runtime obtained from the corresponding two algorithms into account. More precisely, in a first step, we set the runtime of each tree pair whose hybridization number could not be computed within the time limit to $20$~minutes. Second, we consider each tree pair $d$ as relevant if at least one of both algorithms could process $d$ within the time limit of $20$ minutes and if at least one computation referring to one of both algorithms took longer than $50$ seconds. Consequently, the relevant set $\cD$ excludes those tree pairs whose computational complexity is, on the one hand, too low and, on the other hand, too high to reveal a difference between the runtimes of both algorithms. 

Figure~\ref{14-fig-speedUp} again indicates that both modified algorithms are more efficient than the original algorithm. More specifically, for the considered set of tree pairs, \scAMa~and \scAMc~is on mean average about 3.9 times and 11.9 times (median 2.3, 4.9 and 6.6), respectively, faster than \textsc{allMAAFs}. Moreover, our second modified algorithm can significantly improve the practical runtime of our first modified algorithm. More specifically, for the considered set of tree pairs, \scAMc~is on mean average 8.15 times (median 3.18), respectively, faster than \scAMa.

\begin{figure}
\centering
\includegraphics[scale=0.4]{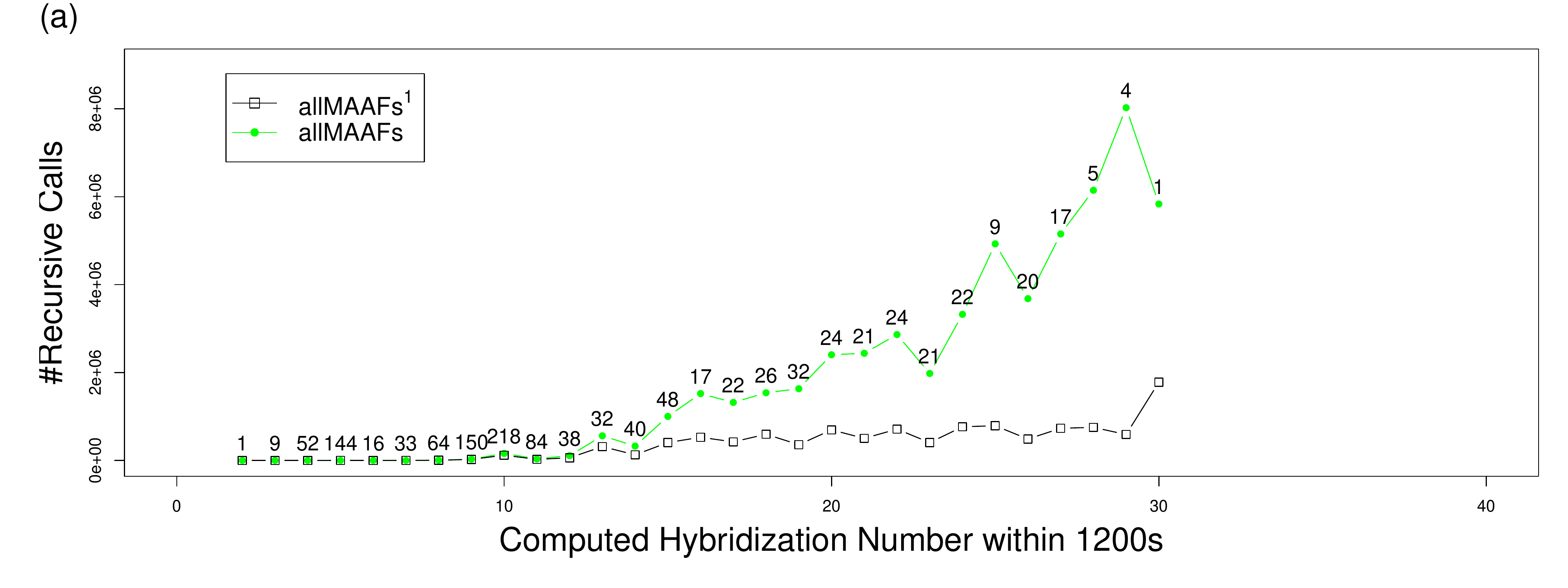}
\includegraphics[scale=0.4]{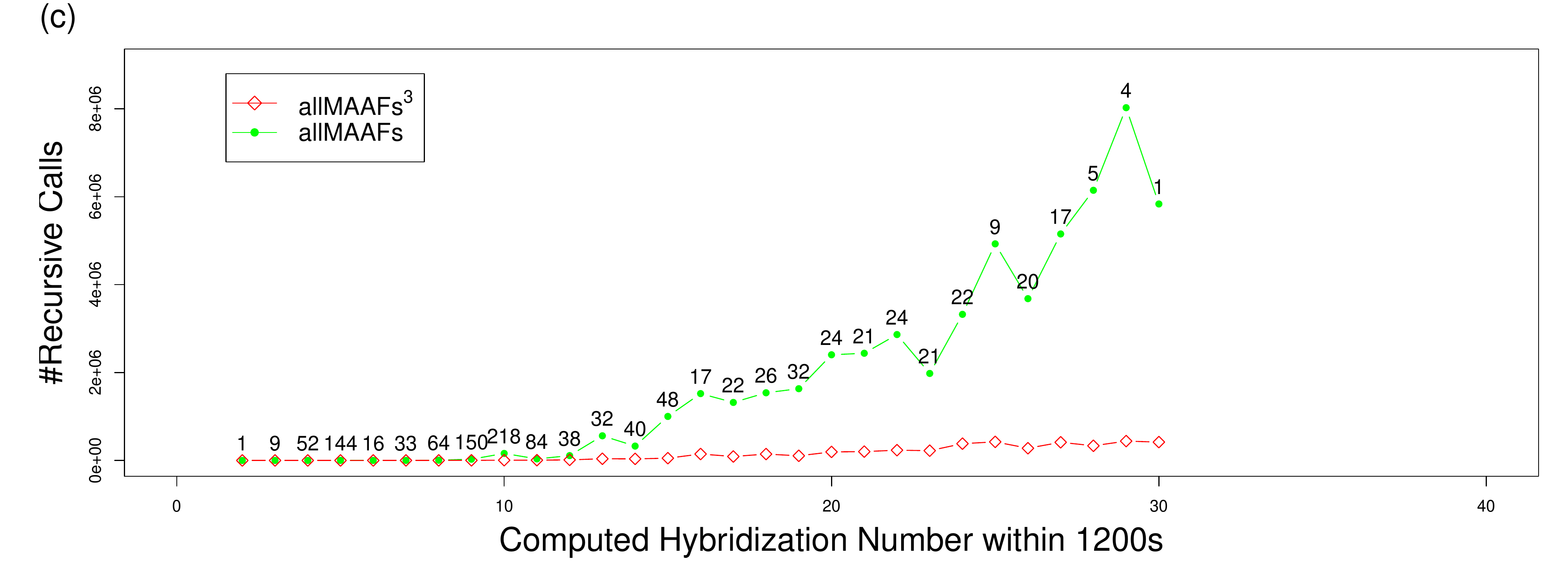}
\includegraphics[scale=0.4]{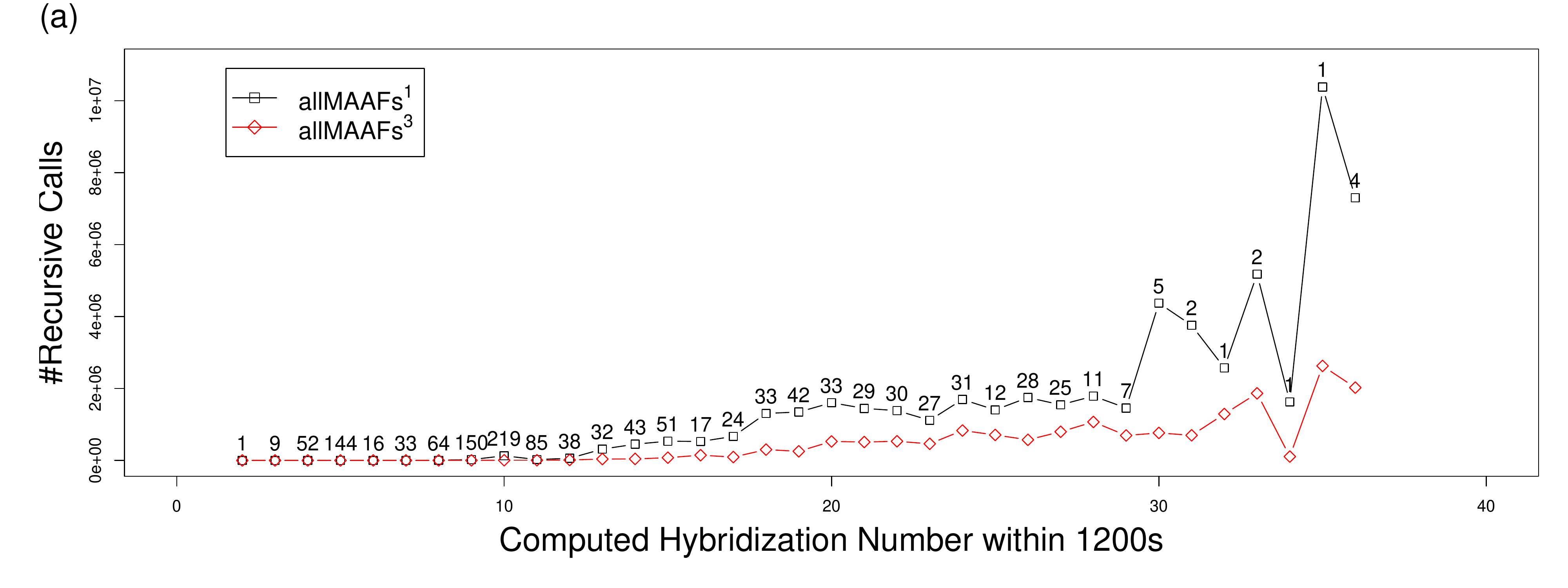}
\caption{Comparisons between the mean average number of recursive calls in terms of the hybridization number of (a) \textsc{allMAAFs} and \scAMa~as well as (c) \textsc{allMAAFs} and \scAMc. For each hybridization number the corresponding number of tree pairs is given that could be computed within the time limit of $20$~minutes by the two respective algorithms and, thus, contributed to the mean average number of recursive calls denoted by the y-axis.} 
\label{14-fig-calls1}
\end{figure}

In order to give a reason of the speedup obtained by our two modified algorithms, we measured the number of recursive calls that have been performed for calculating the hybridization number of each tree pair. To draw comparisons, we only took those tree pairs into account whose hybridization number could be processed by both algorithms within the time limit of $20$~minutes, which were in total $1302$. Next, all these tree pairs were grouped according to their hybridization number and, finally, the mean average number of recursive calls for each group was computed. Regarding Figure~\ref{14-fig-calls1}, the numbers that are attached to each measurement correspond to the number of tree pairs that have contributed to the mean average denoted at the x-axis. 

Figure~\ref{14-fig-calls1} indicates that by applying our two modified algorithms there are significantly less recursive calls necessary for the computation of a maximum acyclic agreement forest compared with the original algorithm. Moreover, Figure~\ref{14-fig-calls1}~(c) shows that our second modified algorithm \scAMc~has to conduct significantly less recursive calls than our first modified algorithm for computing hybridization numbers which, obviously, compensates the effort of preventing additional recursive calls when processing a common cherry. Since the computation of the hybridization number is just an intermediate step in computing all maximum acyclic agreement forests, the difference between the number of recursive calls between the first and both the second and the third modified algorithm is expected to be even larger in this case.

\begin{figure}
\centering
\begin{tabular}{cc}
\includegraphics[width = 5.5cm]{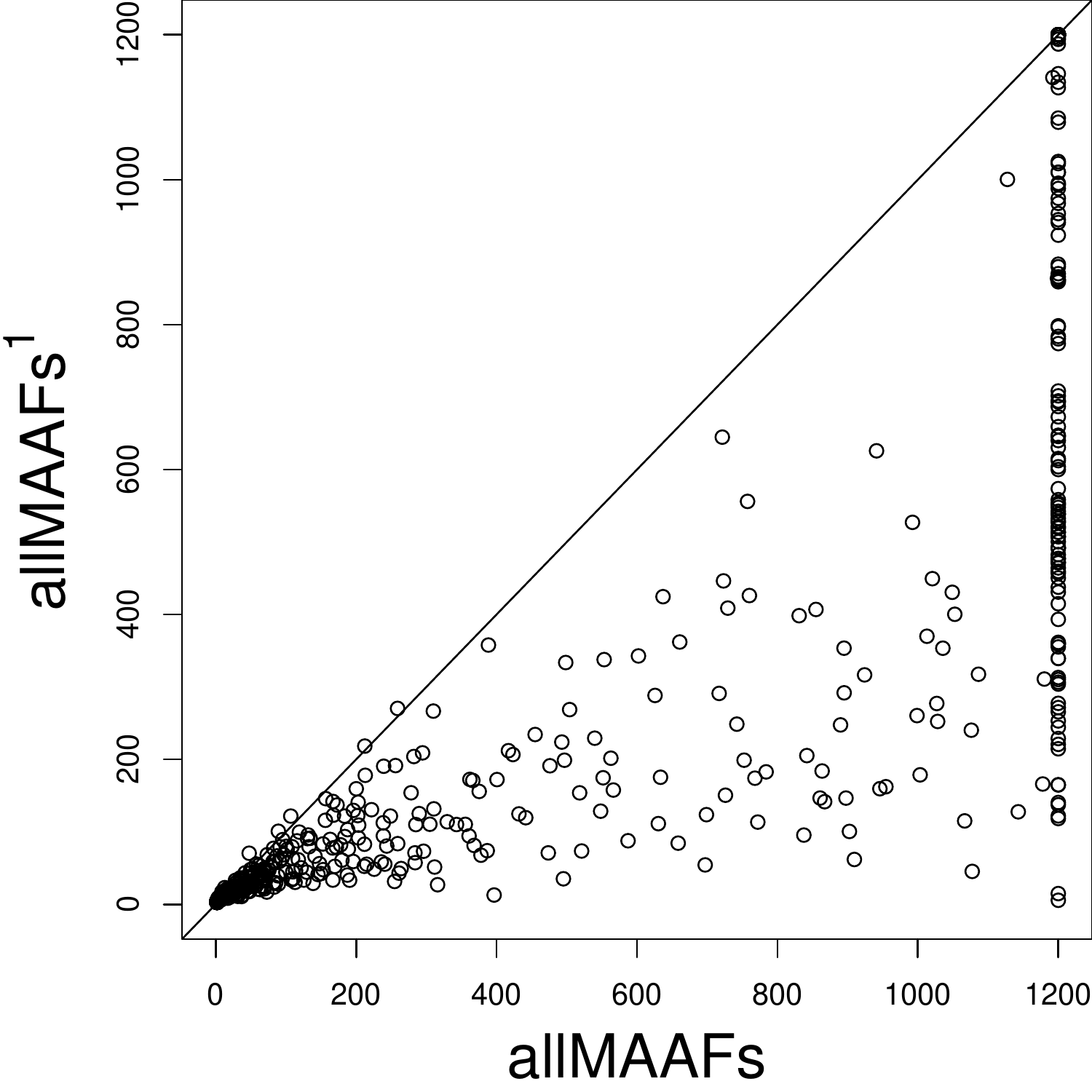}\hspace{0.2cm}
&
\includegraphics[width = 5.5cm]{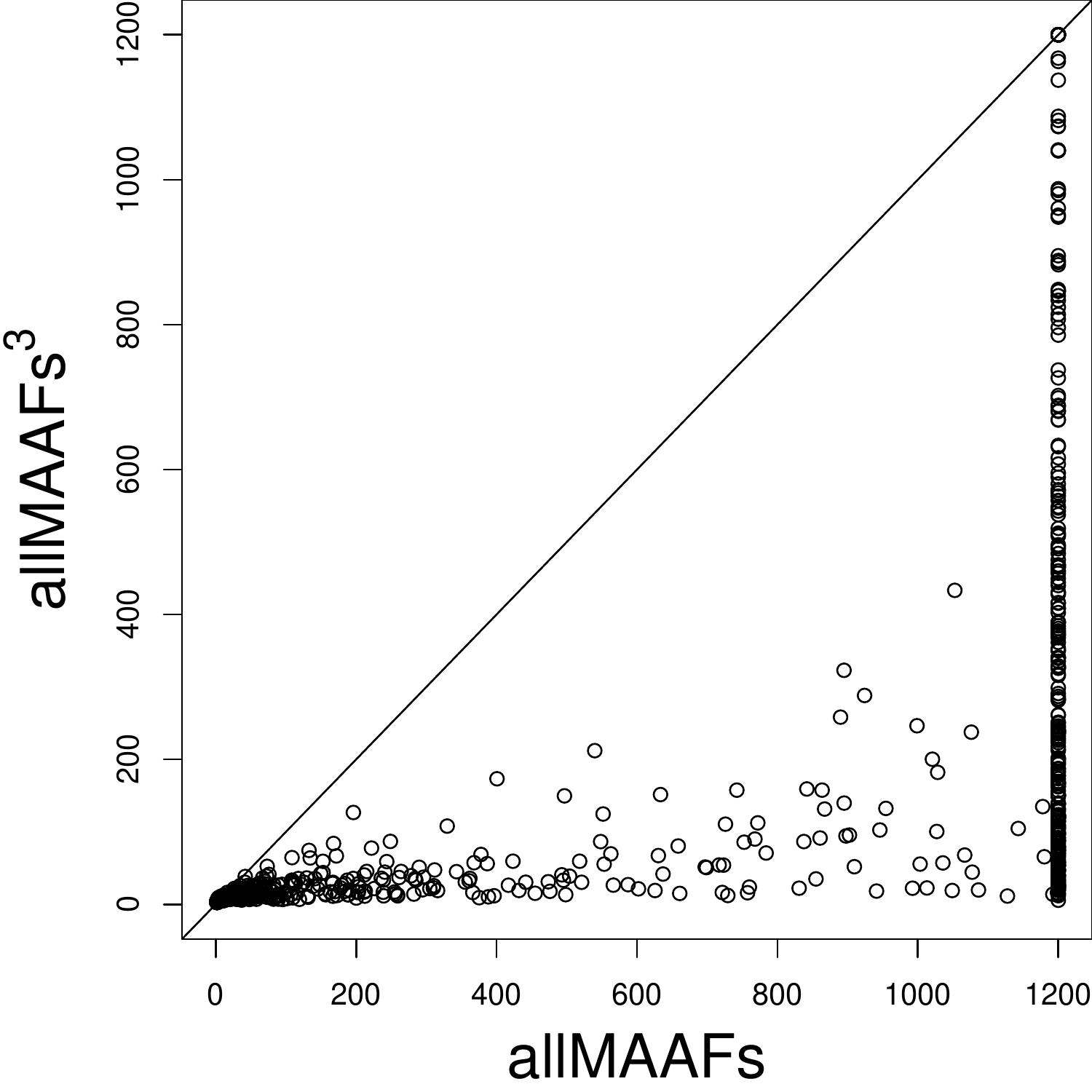}
\end{tabular}
\includegraphics[width = 5.5cm]{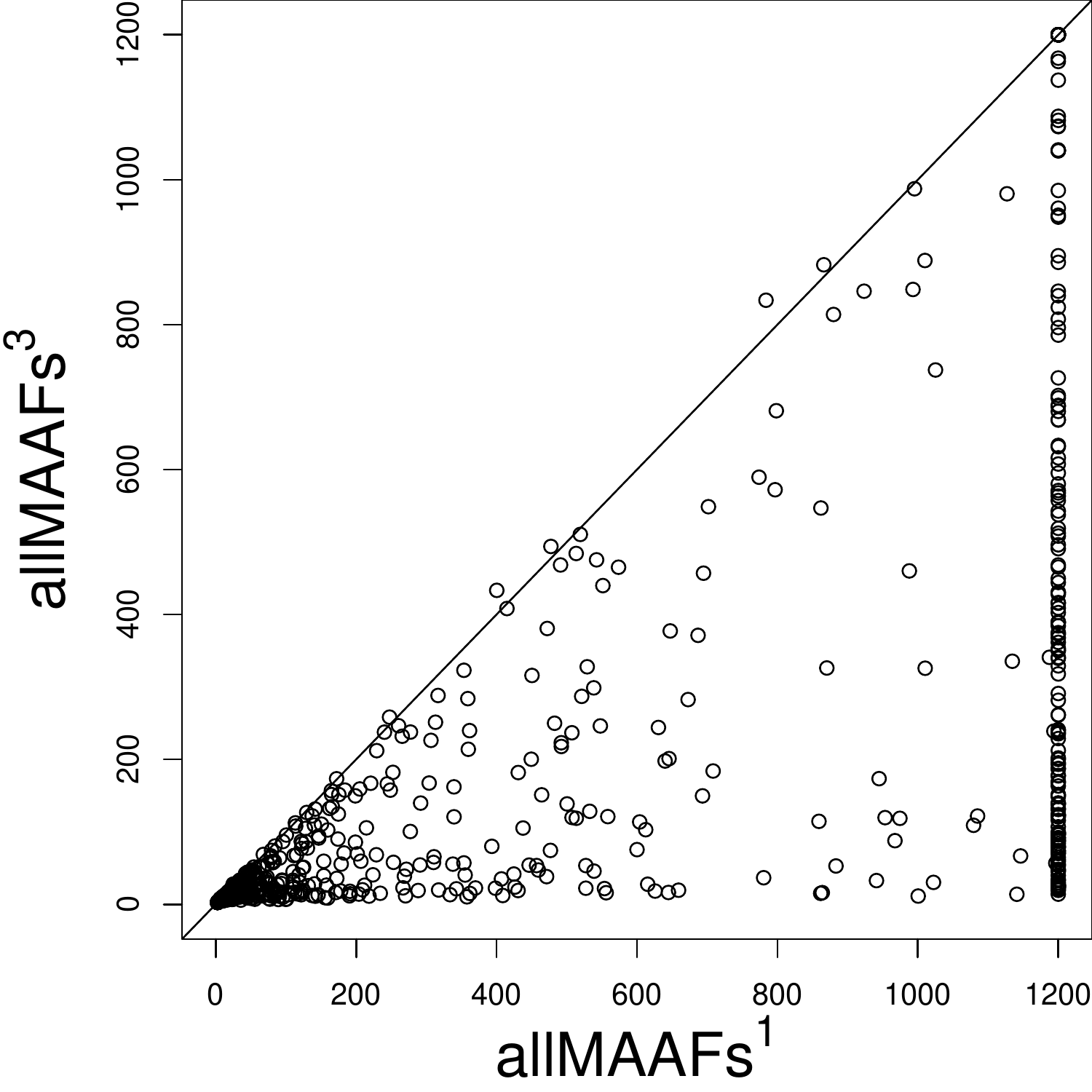}
\caption{Scatter-plots comparing the runtimes produced by the three different algorithms when applying our synthetic dataset. Each dot refers to a tree set of this data set whose corresponding x- and y-value indicates the runtime attained by the respective two algorithms.} 
\label{14-fig-scat}
\end{figure}

Finally, we finish this section by comparing the runtimes of the four algorithms via scatter-plots (cf.~Fig.~\ref{14-fig-scat}). Again, all tree pairs whose hybridization number could not be computed within the time limit was set to $20$~minutes. The plots clearly show that our modified algorithms outperform the original algorithm whereat the modified algorithm \scAMc~is more efficient than the modified algorithm \scAMa.

\section{Running time of \scAMa~and \scAMc}
\label{sec-rt}

The theoretical worst-case runtime of \scAMa~is still the same as the one of the original algorithm \textsc{allMAAFs}, which is $O(3^{|\cX|}|\cX|)$ as stated in the work of Scornavacca \emph{et al.}~\cite[Theorem 3]{Scornavacca2012}. 

\begin{theorem}
Let $T_1$ and $T_2$ be two rooted phylogenetic $\cX$-trees and let $k$ be an integer. The algorithm \scAMa$(T_1,T_2,T_1,\{T_2\},k)$ has a worst-case running time of $O(3^{|\cX|}|\cX|)$.
\label{14-th-rt1}
\end{theorem}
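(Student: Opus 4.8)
The plan is to inherit the bound from the original algorithm rather than re-derive it from scratch, since \scAMa~coincides with \textsc{allMAAFs} everywhere except in a single branch. Concretely, the only place the two algorithms differ is the processing of a contradicting cherry $\{\cL(a),\cL(c)\}$ with $a\sim_{\cF}c$: where \textsc{allMAAFs} opens a branch cutting one pendant edge $e_B$ (cf.~Alg.~\ref{14-alg-contra_cherry}), \scAMa~opens a branch cutting the entire pendant edge set $E_B$ at once (cf.~Alg.~\ref{14-alg-dagger_contra_cherry}). Every other case---the common-cherry branching (contract, cut $e_a$, cut $e_c$) and the contradicting case with $a\not\sim_{\cF}c$ (cut $e_a$, cut $e_c$)---is verbatim the same. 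Hence I would first record that \scAMa~is again a bounded search tree whose branching degree is at most three at every node, exactly as in \textsc{allMAAFs}.

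First I would establish that the recursion terminates within $O(|\cX|)$ levels along any root-to-leaf path. For this I take as progress measure the number of edges of the current forest $\cF$: the initial forest $\{T_2\}$ has $O(|\cX|)$ edges, no recursive call ever inserts edges into $\cF$, and every recursive call removes at least one edge---a cherry reduction removes the two edges below the contracted cherry, a single cut of $e_a$ or $e_c$ removes one edge, and the combined cut of $E_B$ removes $|E_B|\ge 1$ edges. Thus the depth is $O(|\cX|)$ for \scAMa~just as for the original.

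Next I would compare the two search trees directly to transfer the constant in the exponent. The key observation is that a single combined $E_B$-cut of \scAMa~is exactly what \textsc{allMAAFs} realises as a \emph{chain} of single $e_B$-cuts: after \textsc{allMAAFs} cuts one pendant edge, $\{\cL(a),\cL(c)\}$ remains a contradicting cherry with $a\sim_{\cF}c$ until every pendant edge has been removed, so the original traverses at least $|E_B|$ nested recursive calls (each itself three-way branching) to accomplish what \scAMa~does in one step. Consequently the search tree produced by \scAMa~on any input can be mapped into the search tree produced by \textsc{allMAAFs} on the same input without increasing the node count; in particular \scAMa~performs no more recursive calls than \textsc{allMAAFs}. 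Since cutting all $|E_B|=O(|\cX|)$ edges together and forming the restriction $R|_{\cL(\overline{\cF-E_B})}$ still costs only $O(|\cX|)$ per call, the per-node work is unchanged. Combining the node-count bound with the $O(|\cX|)$ per-node cost and invoking \cite[Theorem 3]{Scornavacca2012} yields the claimed $O(3^{|\cX|}|\cX|)$ bound.

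The main obstacle I anticipate is making the comparison of the two search trees watertight: one must argue carefully that collapsing the $e_B$-chain into a single $E_B$-cut neither hides additional branching nor inflates the per-call cost above $O(|\cX|)$, and that the branching degree genuinely stays bounded by three even though $|E_B|$ may be as large as $|V(\cP)|-3$. Once this structural comparison is in place---essentially a companion to the cherry-list rewriting already carried out for correctness in Lemma~\ref{14-lem-ext} and Lemma~\ref{14-lem-dagger}---the bound follows immediately from the original analysis, because a modification that only ever removes recursive calls cannot worsen the worst-case running time.
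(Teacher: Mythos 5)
Your proposal is correct, and the ingredients that actually carry it are the same ones the paper uses: the paper's proof is a direct bounded-search-tree count --- at most $k-1$ edge cuts plus at most $|\cX|$ cherry contractions along any computational path, hence depth $O(|\cX|)$, branching degree at most three, and $O(|\cX|)$ work per call, multiplied out to $O(3^{|\cX|}|\cX|)$. Your edge-count progress measure for $\cF$ is an equivalent (arguably cleaner) way to get the same depth bound, and your observation that the combined $E_B$-cut removes $|E_B|\ge 1$ edges is exactly why the modification does not break it. Where you diverge is the third paragraph: the embedding of \scAMa's search tree into that of \textsc{allMAAFs} so as to ``inherit'' the bound from \cite[Theorem 3]{Scornavacca2012}. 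That step is the only part of your argument that would require genuine work to make watertight (as you yourself note), and it is entirely superfluous --- once you have branching degree $\le 3$, depth $O(|\cX|)$, and $O(|\cX|)$ per-call cost, the bound follows with no reference to the original algorithm's tree. I would drop the comparison and keep only the direct count, which is precisely what the paper does.
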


\begin{proof}
Let $\cF=\{F_{\rho},F_1,F_2,\dots,F_{k-1}\}$ be an agreement forest for $T_1$ and $T_2$ of size $k$. To obtain $\cF$ from $T_2$, one obviously has to cut $k-1$ edges. Moreover, in order to reduce the size of leaf set $\cX$ of $R$ to $1$, to each component $F_i$ in $\cF$ we have to contract exactly $|\cL(F_i)|-1$ cherries. Consequently, at most $|\cX|$ cherry contractions have to be performed in total. Thus, our algorithm has to perform at most $k+|\cX|=O(|\cX|)$ recursive calls for the computation of $\cF$. Now, as one of these recursive calls can at least branch into three further recursive calls, $O(3^{|\cX|})$ is an upper bound for the total number of recursive calls that are performed throughout the whole algorithm. Moreover, as each operation that is performed during a recursive call can be performed in $O(|\cX|)$ time, the worse-case running time of both algorithms is $O(3^{|\cX|}|\cX|)$. 
\end{proof}

When considering the theoretical worst-case runtime of \scAMc, we have to take the running time of a refinement step into account.

\begin{theorem}
Let $T_1$ and $T_2$ be two rooted phylogenetic $\cX$-trees and let $k$ be an integer. The algorithm \scAMc$(T_1,T_2,T_1,\{T_2\},k)$ has a worst-case running time of $O(3^{|\cX|}4^k|\cX|)$.
\label{14-th-rt2}
\end{theorem}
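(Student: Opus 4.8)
The plan is to reuse the recursion-tree analysis established for \scAMa~in Theorem~\ref{14-th-rt1} and then charge the extra cost of the refinement step \textsc{RefineForest} separately. Recall that \scAMc~differs from \scAMa~only in two places: a common cherry is now processed by a single cherry reduction, i.e.\ with branching factor~$1$ instead of~$3$ (cf.~Alg.~\ref{14-alg-mod2-common_cherry}), and at every base case $|\cL(R)|=0$ the expanded forest $\cF'$ is passed through \textsc{RefineForest} before the minimal-size forests are collected (cf.~Alg.~\ref{14-alg-mod2-main}). The first change can only shrink the recursion tree, while contradicting cherries are handled exactly as in the original algorithm and still branch into at most three recursive calls. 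Hence the recursion-tree bound carries over unchanged: along any root-to-leaf path the number of recursive calls is $O(|\cX|)$ (as in Theorem~\ref{14-th-rt1}, it is bounded by the at most $|\cX|$ cherry contractions plus the at most $|\cX|$ edge cuts needed to build one agreement forest), each internal node branches at most three ways, so the number of leaves — equivalently, the number of invocations of \textsc{RefineForest} — is $O(3^{|\cX|})$, and each recursive call still performs $O(|\cX|)$ work apart from the refinement.

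Next I would bound the cost of a single refinement step. \textsc{RefineForest} is applied to an expanded agreement forest of size at most $k+1$, and its task is to enumerate all maximum acyclic agreement forests obtainable from that forest by cutting additional edges so as to destroy the directed cycles of the associated expanded cycle graph. By the analysis of Whidden \emph{et al.}~\citet{Whidden2011}, resolving these cycles can be organized as a bounded search whose total branching is $O(4^k)$, with $O(|\cX|)$ bookkeeping per branch; thus each call to \textsc{RefineForest} runs in $O(4^k|\cX|)$ time and outputs $O(4^k)$ forests.

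Finally I would multiply the two bounds. The work spent outside the refinement steps is $O(3^{|\cX|}|\cX|)$, exactly as in Theorem~\ref{14-th-rt1}, and the work spent inside them is the number of invocations times the per-invocation cost, i.e.\ $O(3^{|\cX|})\cdot O(4^k|\cX|)=O(3^{|\cX|}4^k|\cX|)$. Since the latter dominates the former, the worst-case running time of \scAMc~is $O(3^{|\cX|}4^k|\cX|)$, as claimed.

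I expect the main obstacle to be making the $O(4^k|\cX|)$ bound on \textsc{RefineForest} precise, specifically arguing that the parameter feeding the cycle-resolution search is genuinely bounded by the input integer $k$ rather than by $|\cX|$, so that refinement contributes a factor $4^k$ and not a further exponential in $|\cX|$. This is exactly where the size cutoff of \scAMc~is used: every forest of size exceeding $k+1$ is discarded before refining, so the number of additional edges that may be cut to break cycles is $O(k)$, which is what licenses the Whidden \emph{et al.} bound. For the correctness and running time of the cycle-resolution procedure itself I would cite \citet{Whidden2011} rather than reprove it.
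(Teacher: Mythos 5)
Your proposal is correct and takes essentially the same route as the paper: bound the number of recursive calls by $O(3^{|\cX|})$ as in Theorem~\ref{14-th-rt1}, charge $O(4^k|\cX|)$ per invocation of \textsc{RefineForest} by citing \citet{Whidden2011}, and multiply. Your added observations --- that the single-branch common-cherry processing can only shrink the recursion tree, and that the refinement is invoked only at the base cases --- are slightly more careful than the paper's one-line argument, but the substance is identical.
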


\begin{proof}
As stated in Theorem~\ref{14-th-rt1}, the algorithm has to conduct $O(3^{|\cX|})$ recursive calls. Potentially, for each of these recursive calls we have to apply a refinement step whose theoretical running time is stated with $O(4^k|\cX|)$ \cite{Whidden2011}. Moreover, as all other operations that are performed during a recursive call can be performed in constant time, the worse-case running time of the algorithm is $O(3^{|\cX|}4^k|\cX|)$.
\end{proof}

Even though both presented modifications do not improve the theoretical worst-case running time, our simulation study indicates that in practice these modifications are significantly faster than the original algorithm \textsc{allMAAFs}. Regarding our first modification \scAMa, this is simply due to the fact that the number of computational paths arising from processing a contradicting cherry is reduced. More precisely, if there is a computational path calculating a maximum acyclic agreement forest in which the set of pendant edges $E_B$ for a particular cherry is cut, the original algorithms produces $|E_B|-1$ redundant recursive calls.

Due to the refinement steps that are performed at the end of each recursive call, our third modification \scAMc~always has to start only one recursive call for processing a common cherry. This refinement step, however, has to be conducted for each maximum agreement forest $\cF$ in order to transform $\cF$ into all maximum acyclic agreement forests. Nevertheless, as indicated by our simulation study, this post processing step is still more efficient than always running two additional recursive calls when processing a common cherry (as it is the case for the original algorithm \textsc{allMAAFs}).

In conclusion, due to the significant speedup that can be obtained from applying both presented modifications, we feel certain that this work describes a noticeable improvement to the original algorithm \textsc{allMAAFs}, which will make this algorithm accessible for a wider range of biological real-world applications.

\clearpage
\section{Conclusion}
\label{sec-disc}

An important approach to study reticulate evolution is the reconciliation of incongruent gene trees into certain kinds of phylogenetic networks, so-called hybridization networks. From a biological point of view, it is important to calculate not only one but \emph{all} of those networks as, once calculated all of these networks, one can then apply certain filtering techniques testing certain hypothesis. As already discussed in \citet{Scornavacca2012}, the recently published algorithm \textsc{allMAAFs} calculating all maximum acyclic agreement forests for two rooted binary phylogenetic $\cX$-trees, is considered to be an important step to achieve this goal.

In this paper, we have presented the two modifications \scAMa~and \scAMc~of the algorithm \textsc{allMAAFs} still calculating all maximum acyclic agreement forests for both input trees as shown by formal proofs. Moreover, as we have additionally indicated by a simulation study that both modifications significantly improve the practical running time of the original algorithm, we feel certain that this work will facilitate the computation of hybridization networks for larger input trees and, thus, makes this approach accessible to a wider range of biological problems.

\section*{Acknowledgements}
We gratefully acknowledge Simone Linz for useful discussions.



\end{document}